\definecolor{webgreen}{rgb}{0,.5,0}
\newcommand{\ie}{i.\,e.}
\newcommand{\eg}{e.\,g.}
\theoremstyle{plain}
\newtheorem{lemma}{Lemma}
\newtheorem{theorem}{Theorem}
\newtheorem{corollary}{Corollary}
\newtheorem{definition}{Definition}
\theoremstyle{definition}
\newtheorem{example}{Example}
\def\ed{($\epsilon$,\,$\delta$)}
\def\Prob{\mathbb{P}}
\def\R{\mathbb{R}}
\def\md{\mathbf{d}}
\def\double{{\prime\prime}}
\DeclareMathOperator{\Var}{Var}
\DeclareMathOperator*{\argmin}{arg\,min}
\begin{document}

\title{Optimal Differentially Private Mechanisms for Randomised Response}
\author{Naoise Holohan, Douglas J.\ Leith, Oliver Mason%
\thanks{Naoise Holohan and Douglas J.\ Leith are with the School of Computer Science and Statistics, Trinity College, Dublin 2, Ireland (e-mail: \href{mailto:nholohan@tcd.ie}{nholohan@tcd.ie}; \href{mailto:doug.leith@tcd.ie}{doug.leith@tcd.ie})}%
\thanks{Oliver Mason is with the Department of Mathematics\slash Hamilton Institute, Maynooth University, Co.\ Kildare, Ireland \& Lero, the Irish Software Research Centre (e-mail: \href{mailto:oliver.mason@nuim.ie}{oliver.mason@nuim.ie})}}

\maketitle

\addcontentsline{toc}{section}{Abstract}

\begin{abstract}
We examine a generalised Randomised Response (RR) technique in the context of differential privacy and examine the optimality of such mechanisms. Strict and relaxed differential privacy are considered for binary outputs. By examining the error of a statistical estimator, we present closed solutions for the optimal mechanism(s) in both cases. The optimal mechanism is also given for the specific case of the original RR technique as introduced by Warner in 1965.
\end{abstract}

\begin{IEEEkeywords}
Randomised response, randomized response, differential privacy, optimality
\end{IEEEkeywords}

%%%%%%%%%%%%%%%%%%%%%%%%%%%%%%%%%%%%%%%%%%%%%%%%%%%%%%%%%%%%%%%%%%%%%%%%%%%%%%%%%%%%%%%%%%%%%%%%%%%%%%%%
%%%%%%%%%%%%%%%%%%%%%%%%%%%%%%%%%%%%%%%%%%%%%%%%%%%%%%%%%%%%%%%%%%%%%%%%%%%%%%%%%%%%%%%%%%%%%%%%%%%%%%%%
\section{Introduction}

%%%%%%%%%%%%%%%%%%%%%%%%%%%%%%%%%%%%%%%%%%%%%%%%%%%%%%%%%%%%%%%%%%%%%%%%%%%%%%%%%%%%%%%%%%%%%%%%%%%%%%%%
\subsection{Background}
\IEEEPARstart{S}{tanley}~L.\ Warner first proposed the Randomised Response (RR) technique as a means to eliminate bias in surveying in 1965 \cite{War65}.  Respondents would be handed a spinner by the surveyor to decide which of two questions the respondent would answer, for example,
\begin{enumerate}
\item Have you ever cheated on your spouse\slash partner?
\item Have you always been faithful to your spouse\slash partner?
\end{enumerate}

Respondents would spin the spinner in private and answer the given question truthfully with a `yes' or `no'.  Respondents would be afforded \emph{plausible deniability} as the surveyor would not know the question to which the answer refers.  This would encourage respondents to engage with the survey and answer the question truthfully.  The spinner can be replaced by any appropriate randomisation device, such as coin flips, dice or drawing from a pack of cards.

A rich body of literature now exists on RR. The inefficiencies of Warner's original RR model have been examined by a number of authors and many new RR models have been proposed.  These include the unrelated question model \cite{GAS69}, the forced response model \cite{Bor71}, Moor's procedure \cite{Moo71} and two-stage RR models \cite{MS90, Man94}.  More comprehensive lists of RR models can be found in \cite{Kru13,BIZ15}.

RR is actively used in surveying when asking questions of a sensitive nature.  Examples include surveys on doping and drug use in elite athletes \cite{SUS10}, cognitive-enhancing drug use among university students \cite{DSF15}, faking on a CV \cite{DDH03}, corruption \cite{Gin10}, sexual behaviour \cite{CDJ14}, and child molestation \cite{FL88}. 

Researchers remain divided on the effectiveness of RR.  While some works have shown RR to be an improvement on different survey techniques, including direct questioning (where no randomisation is involved), \cite{HGB00,GG75,LSO04,Kru12,TF81}, others remain sceptical on its advantage \cite{WS94,WP13,LHG97}.  Public trust in RR has also been shown to be lacking \cite{CJ08}.

Separately, differential privacy has emerged as a model of interest in privacy-preserving data publishing since being presented in 2006 \cite{Dwo06}.  Differential privacy gives a quantitative mathematical definition to measure the level of privacy achieved in a given data release.  This definition determines the amount of manipulation that needs to be applied to the data to achieve the desired level of privacy.  Under differential privacy, privacy is quantified by how statistically indistinguishable the privacy-preserved outputs from two similar datasets are.

When applied to randomised response, where the output from a single individual is binary, differential privacy requires the output from any two individuals to be statistically indistinguishable, to a specified degree.  

%%%%%%%%%%%%%%%%%%%%%%%%%%%%%%%%%%%%%%%%%%%%%%%%%%%%%%%%%%%%%%%%%%%%%%%%%%%%%%%%%%%%%%%%%%%%%%%%%%%%%%%%
\subsection{Our Results}
In this paper we examine a generalisation of Warner's original RR technique, and establish conditions under which such a model satisfies differential privacy.  By calculating the estimator of minimal variance, we determine the optimal differentially private RR mechanism.  We examine strict $\epsilon$-differential privacy and relaxed ($\epsilon$,$\delta$)-differential privacy. Complete solutions for the optimal mechanisms are presented for both cases. The optimal mechanism is also given for Warner's RR model satisfying {\ed}-differential privacy.

%%%%%%%%%%%%%%%%%%%%%%%%%%%%%%%%%%%%%%%%%%%%%%%%%%%%%%%%%%%%%%%%%%%%%%%%%%%%%%%%%%%%%%%%%%%%%%%%%%%%%%%%
\subsection{Related Work}
The application of differential privacy to randomised response has been limited to date. \cite{WWH14} examined using randomised response to differentially privately collect data, although their analysis only considered strict $\epsilon$-differential privacy and a comparison of its efficiency with respect to the Laplace mechanism, a mechanism popular in the differential privacy literature.

Randomised response has been used in conjunction with differential privacy in a more general context in the form of \emph{local privacy}, also known as \emph{input perturbation}. For example, extreme mechanisms for local differential privacy have been studied in \cite{KOV14, HLM16b}, while differential privacy was applied to social network data in the form of graphs with randomised response in \cite{KSK14}. Outside randomised response and local privacy, optimal mechanisms in differential privacy have received some attention, including work on strict differential privacy \cite{GV14} and relaxed differential privacy \cite{GV13b}.

%%%%%%%%%%%%%%%%%%%%%%%%%%%%%%%%%%%%%%%%%%%%%%%%%%%%%%%%%%%%%%%%%%%%%%%%%%%%%%%%%%%%%%%%%%%%%%%%%%%%%%%%
\subsection{Structure of Paper}
We begin in Section~\ref{sc:prelim} with an introduction to the Randomised Response (RR) technique, and derive the statistical estimator and associated bias and error; we also present Warner's original RR model.  We introduce differential privacy in Section~\ref{sc:dp} and present a number of preliminary results for later use in Section~\ref{sc:prelimresults}.

The main results are given in Sections~\ref{sc:optsdp}, \ref{sc:optrdp} and \ref{sc:optwar}, relating to strict differential privacy, relaxed differential privacy and Warner's model respectively.  Concluding remarks are given in Section~\ref{sc:conc}.

%%%%%%%%%%%%%%%%%%%%%%%%%%%%%%%%%%%%%%%%%%%%%%%%%%%%%%%%%%%%%%%%%%%%%%%%%%%%%%%%%%%%%%%%%%%%%%%%%%%%%%%%
%%%%%%%%%%%%%%%%%%%%%%%%%%%%%%%%%%%%%%%%%%%%%%%%%%%%%%%%%%%%%%%%%%%%%%%%%%%%%%%%%%%%%%%%%%%%%%%%%%%%%%%%
\section{Randomised Response}\label{sc:prelim}

%%%%%%%%%%%%%%%%%%%%%%%%%%%%%%%%%%%%%%%%%%%%%%%%%%%%%%%%%%%%%%%%%%%%%%%%%%%%%%%%%%%%%%%%%%%%%%%%%%%%%%%%
\subsection{Introduction}

We are looking to determine the proportion $\pi$ of people in the population possessing a particular sensitive attribute, where possession of the attribute is binary. We conduct a survey on $n$ individuals of the population by uniform random sampling with replacement.

A single respondent's answer $X_i\in\{0,1\}$ is a randomised version of their truthful answer $x_i\in\{0,1\}$, in order to protect their privacy.  The randomised response will therefore not definitively reveal a respondent's truthful answer.  By convention, a value of \num{1} denotes possession of the sensitive attribute, while \num{0} denotes that the respondent does not possess the attribute.  We denote by $N$ the number of randomised responses that return \num{1}, hence $N=\sum_{i \in [n]} X_i$ where $[n]=[1,n] \cap \mathbb{Z}$. We are therefore looking to estimate $\pi$ from $\frac{N}{n}$.

%%%%%%%%%%%%%%%%%%%%%%%%%%%%%%%%%%%%%%%%%%%%%%%%%%%%%%%%%%%%%%%%%%%%%%%%%%%%%%%%%%%%%%%%%%%%%%%%%%%%%%%%
\subsection{Generalised RR Model}\label{sc:genrr}

In keeping with standard notation, $(\Omega, \mathcal{F}, \Prob)$ denotes a probability space. $X_i: \Omega \to \{0,1\}$ is then a random variable for each $i \in [n]$, dependent on the truthful value $x_i$.  We define the randomised response mechanism by
\begin{equation}\label{eq:rrmech}
\Prob(X_i=k \mid x_i = j)=p_{jk},
\end{equation}
which leads us to defining the design matrix of the mechanism as follows.

\begin{definition}[Design Matrix]
A randomised response mechanism as defined in (\ref{eq:rrmech}) is uniquely determined by its design matrix,
$$P=\left(\begin{array}{cc} p_{00} & p_{01} \\ p_{10} & p_{11}\end{array}\right).$$
For the probability mass functions of each $X_i$ to sum to \num{1}, we require $p_{00} + p_{01}=1$ and $p_{10} + p_{11} = 1$. The design matrix therefore simplifies to
\begin{equation}\label{eq:designmtx}
P = \left(\begin{array}{cc} p_{00} & 1-p_{00} \\ 1-p_{11} & p_{11}\end{array}\right),
\end{equation}
where $p_{00},p_{11} \in [0,1]$.
\end{definition}

As $\pi$ is the true proportion of individuals in the population possessing the sensitive attribute, we can calculate the probability mass function of each $X_i$:
\begin{subequations}\label{eq:pmf}
\begin{gather}
\begin{aligned}
\Prob(X_i = 0) &= (1-\pi) p_{00} + \pi (1-p_{11})\\
&= p_{00} - \pi(p_{00} + p_{11} - 1),
\end{aligned}\\[1em]
\begin{aligned}
\Prob(X_i = 1) &= \pi p_{11} + (1-\pi)(1-p_{00})\\
&= 1-p_{00} + \pi(p_{00} + p_{11} - 1).
\end{aligned}
\end{gather}
\end{subequations}

\textbf{Remark:} Direct questioning corresponds to the case where $p_{00}=p_{11}=1$.

%%%%%%%%%%%%%%%%%%%%%%%%%%%%%%%%%%%%%%%%%%%%%%%%%%%%%%%%%%%%%%%%%%%%%%%%%%%%%%%%%%%%%%%%%%%%%%%%%%%%%%%%
\subsection{Estimator, Bias and Error}
Having presented the RR mechanism previously, we now need to establish an estimator of $\pi$ from the parameters of the mechanism, $p_{00}$ and $p_{11}$, and from the distribution of randomised responses, namely $\frac{N}{n}$. We first establish a maximum likelihood estimator (MLE) for the mechanism and then examine its bias and error.

\begin{theorem}\label{th:mle}
Let $p_{00} + p_{11} \neq 1$. Then the MLE for $\pi$ of the randomised response mechanism given by (\ref{eq:designmtx}) is
\begin{equation}\label{eq:generalmle}
\hat{\Pi}(p_{00},p_{11}) = \frac{p_{00}-1}{p_{00} + p_{11} - 1} + \frac{N}{(p_{00} + p_{11} - 1)n}.
\end{equation}
\end{theorem}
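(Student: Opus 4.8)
The plan is to exploit the fact that, because the survey samples individuals uniformly with replacement, the randomised responses $X_1,\dots,X_n$ are independent and identically distributed Bernoulli variables whose single success parameter is a known affine function of $\pi$. First I would set $\theta := \Prob(X_i = 1) = 1 - p_{00} + \pi(p_{00}+p_{11}-1)$, which is exactly the expression computed in (\ref{eq:pmf}). Since each $X_i$ is $\mathrm{Bernoulli}(\theta)$ and $N = \sum_i X_i$, observing $N$ ones out of $n$ responses gives the likelihood
\[
L(\pi) = \theta^N (1-\theta)^{n-N},
\]
up to the binomial coefficient, which is independent of $\pi$ and hence irrelevant to the maximisation.

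Next I would maximise the log-likelihood $\ell(\pi) = N\ln\theta + (n-N)\ln(1-\theta)$. Writing $c := p_{00}+p_{11}-1$ for the slope $\mathrm{d}\theta/\mathrm{d}\pi$, which is nonzero precisely by the hypothesis $p_{00}+p_{11}\neq 1$, the chain rule yields
\[
\ell'(\pi) = c\left(\frac{N}{\theta} - \frac{n-N}{1-\theta}\right).
\]
Setting this to zero and cancelling $c$ gives $N(1-\theta) = (n-N)\theta$, i.e.\ $\hat\theta = N/n$. Inverting the affine relation $\theta = 1 - p_{00} + \pi c$ to solve for $\pi$ then delivers exactly (\ref{eq:generalmle}). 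Equivalently, one may bypass the calculus by invoking the invariance property of maximum likelihood estimators: the MLE of a Bernoulli success probability is the sample mean $N/n$, and since $\pi \mapsto \theta$ is a bijection whenever $c \neq 0$, the MLE of $\pi$ is obtained by inverting this map at $N/n$.

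To finish rigorously I would confirm that the stationary point is a genuine maximiser: the Bernoulli log-likelihood is strictly concave in $\theta$, and $\theta$ is affine in $\pi$, so $\ell$ is concave in $\pi$ and the critical point is the global maximum. The main obstacle here is conceptual rather than computational, and it is precisely the role of the assumption $p_{00}+p_{11}\neq 1$: this is the condition guaranteeing $c\neq 0$, so that $\theta$ genuinely depends on $\pi$ and the map can be inverted: when $c = 0$ the responses are uninformative about $\pi$ and no estimator of this form exists, which is why the theorem excludes that degenerate case. A secondary point worth noting is that (\ref{eq:generalmle}) is the unconstrained interior solution and may fall outside $[0,1]$; if one requires a genuine proportion, the constrained MLE is its projection onto $[0,1]$, but the stated formula is the interior critical point.
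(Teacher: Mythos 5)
Your proof is correct and follows essentially the same route as the paper: write the likelihood as $\Prob(X_i=1)^N\Prob(X_i=0)^{n-N}$, differentiate the log-likelihood in $\pi$, verify concavity, and solve the first-order condition, with the hypothesis $p_{00}+p_{11}\neq 1$ entering exactly where you say it does. Your reparametrisation through $\theta$ and the appeal to MLE invariance is a slightly cleaner packaging of the same calculation, and your closing caveat that the critical point may fall outside $[0,1]$ is a fair observation the paper does not address, but neither changes the substance of the argument.
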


	\begin{proof}
	Let us first index the sample so that $X_i=1$ for each $i\le N$, and $X_i=0$ for each $i>N$. Then the likelihood $L$ of the sample is
	$$L = \Prob(X_i=1)^N \Prob(X_i=0)^{n-N}.$$
	The log-likelihood is
	$$\log(L) = N \log \Prob(X_i=1) + (n-N) \log \Prob(X_i=0),$$
	whose derivatives are
	\begin{align*}
	\frac{\partial \log(L)}{\partial \pi} &= \frac{N}{\Prob(X_i=1)}\frac{\partial \Prob(X_i=1)}{\partial \pi}+\frac{n-N}{\Prob(X_i=0)}\frac{\partial \Prob(X_i=0)}{\partial \pi},\\
	\frac{\partial^2 \log(L)}{\partial \pi^2} &= -\frac{N}{\Prob(X_i=1)^2}\left(\frac{\partial \Prob(X_i=1)}{\partial \pi}\right)^2-\frac{n-N}{\Prob(X_i=0)^2}\left(\frac{\partial \Prob(X_i=0)}{\partial \pi}\right)^2.
	\end{align*}
	We note that $\frac{\partial^2 \log(L)}{\partial \pi^2}<0$, hence the maximum of $\log(L)$ occurs when $\frac{\partial \log(L)}{\partial \pi}=0$. Solving for $\pi$ completes the proof.
	\end{proof}

We note the following standard identity in probability and statistics,
\begin{equation}\label{eq:var}
\Var(Y) = \mathbb{E}[Y^2] - \mathbb{E}[Y]^2,
\end{equation}
for any random variable $Y$. We now calculate the bias and error of $\hat{\Pi}$.  We use the variance of the estimator to characterise error in line with conventional practice.  Similarly by convention, we characterise the bias of an estimator as its expected deviation from the quantity it is estimating (\ie\ $\mathbb{E}[\hat{\Pi}-\pi]$). We remind the reader of the dependence of $\Var(\hat{\pi})$ on $\pi$ by writing $\Var(\hat{\Pi}| \pi)$.

\begin{corollary}
The MLE $\hat{\Pi}$ constructed in Theorem~\ref{th:mle} is unbiased and has error
\begin{equation}\label{eq:genvar}
\Var(\hat{\Pi}(p_{00},p_{11})| \pi) = \frac{\frac{1}{4} - \left(p_{00}-\frac{1}{2}-\pi(p_{00} + p_{11} - 1)\right)^2}{(p_{00}+p_{11}-1)^2 n}.
\end{equation}
\end{corollary}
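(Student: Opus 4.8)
The plan is to exploit the fact that, from (\ref{eq:generalmle}), the estimator $\hat{\Pi}$ is an \emph{affine} function of the count $N$, namely $\hat{\Pi} = a + bN$ with $a = \frac{p_{00}-1}{p_{00}+p_{11}-1}$ and $b = \frac{1}{(p_{00}+p_{11}-1)n}$. Since the sample is drawn uniformly with replacement, the $X_i$ are i.i.d.\ Bernoulli random variables with $\Prob(X_i=1)=q$, where by (\ref{eq:pmf}) we have $q = 1-p_{00}+\pi(p_{00}+p_{11}-1)$. Hence $N=\sum_{i\in[n]} X_i$ is a sum of $n$ i.i.d.\ Bernoulli variables, so $\mathbb{E}[N]=nq$ and $\Var(N)=nq(1-q)$; the latter follows from (\ref{eq:var}) applied to each $X_i$, since $\mathbb{E}[X_i^2]=\mathbb{E}[X_i]=q$ gives $\Var(X_i)=q-q^2$, and variances add over independent summands.

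For unbiasedness I would substitute $\mathbb{E}[N]=nq$ into $\mathbb{E}[\hat{\Pi}] = a + b\,\mathbb{E}[N]$. The constant and the $p_{00}$-dependent contributions to the numerator cancel, leaving $\mathbb{E}[\hat{\Pi}] = \frac{\pi(p_{00}+p_{11}-1)}{p_{00}+p_{11}-1} = \pi$, so $\mathbb{E}[\hat{\Pi}-\pi]=0$ as required.

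For the error, affine scaling gives $\Var(\hat{\Pi}| \pi) = b^2\Var(N) = \frac{q(1-q)}{(p_{00}+p_{11}-1)^2 n}$, so it remains only to rewrite the numerator $q(1-q)$ in the stated quadratic form. The key observation is the elementary identity $q(1-q) = \frac14 - \left(q-\frac12\right)^2$. Computing $q-\frac12 = \frac12 - p_{00} + \pi(p_{00}+p_{11}-1) = -\left(p_{00}-\frac12-\pi(p_{00}+p_{11}-1)\right)$ and squaring, I obtain $q(1-q) = \frac14 - \left(p_{00}-\frac12-\pi(p_{00}+p_{11}-1)\right)^2$. Substituting this into the expression for $\Var(\hat{\Pi}| \pi)$ yields (\ref{eq:genvar}).

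This argument is essentially a chain of routine substitutions, so there is no genuine analytical obstacle; the only point requiring care is the algebraic bookkeeping, in particular recognising the completion-of-the-square step that turns the Bernoulli variance $q(1-q)$ into the symmetric form displayed in (\ref{eq:genvar}), centred about $q=\frac12$. Keeping the composite parameter $p_{00}+p_{11}-1$ intact throughout, rather than expanding it, is what makes the cancellations in both parts transparent.
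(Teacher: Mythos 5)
Your proposal is correct and follows essentially the same route as the paper: both use the i.i.d.\ structure of the $X_i$ to get $\mathbb{E}[N]=nq$ and $\Var(N)=nq(1-q)$ with $q=\Prob(X_i=1)$, verify unbiasedness by substitution, and reduce the error to $\Prob(X_i=1)\Prob(X_i=0)/\bigl((p_{00}+p_{11}-1)^2 n\bigr)$. The only difference is that you make explicit the completion-of-the-square step $q(1-q)=\tfrac{1}{4}-\bigl(q-\tfrac{1}{2}\bigr)^2$ that the paper leaves as ``which can be simplified to (\ref{eq:genvar})''.
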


	\begin{proof}
	Since the survey we are conducting is by uniform random sampling with replacement, $N$ is a sum of independent and identically distributed random variables.  Therefore, $\mathbb{E}[N] = n \mathbb{E}[X_i]$ and $\Var(N) = n \Var(X_i)$.
	
	Since $X_i \in \{0,1\}$, it can be shown that $\mathbb{E}[X_i] = \mathbb{E}[X_i^2] = \Prob(X_i=1) = 1-p_{00} + \pi(p_{00} + p_{11}-1)$. Hence,
	\begin{align*}
	\mathbb{E}[\hat{\Pi}] &= \frac{p_{00}-1}{p_{00} + p_{11} - 1} + \frac{\mathbb{E}[N]}{(p_{00} + p_{11} - 1)n}\\
	&= \frac{p_{00}-1}{p_{00} + p_{11} - 1} + \frac{\mathbb{E}[X_i]}{p_{00} + p_{11} - 1}\\
	&= \pi,
	\end{align*}
	and so $\hat{\Pi}$ is unbiased as claimed.
	
	Secondly,
	\begin{align*}
	\Var(\hat{\Pi}| \pi) &= \frac{\Var(N)}{(p_{00}+p_{11}-1)^2 n^2}\\
	&= \frac{\Var(X_i)}{(p_{00}+p_{11}-1)^2 n}\\
	&= \frac{\mathbb{E}[X_i^2]-\mathbb{E}[X_i]^2}{(p_{00}+p_{11}-1)^2 n}\\
	&= \frac{\Prob(X_i=1)\Prob(X_i=0)}{(p_{00}+p_{11}-1)^2 n},
	\end{align*}
	which can be simplified to (\ref{eq:genvar}).
	\end{proof}

\begin{subequations}
When conducting a survey on a population, it is often useful and necessary to estimate the margin of error of the estimate on a sample.  For a confidence level $c \in [0,1]$, the \emph{margin of error} of a sample is given by $\omega \ge 0$, where
\begin{equation}
\Prob(|\hat{\Pi}-\pi| \le \omega) \ge c.
\end{equation}
In practical applications, a 95\% confidence interval is typically used \cite{Jac05}.  In the absence of any additional information on the distribution of $\hat{\Pi}$, Chebyshev's inequality can be used to derive a general, but conservative, margin of error, assuming $\hat{\Pi}$ has finite variance. In such a scenario, the margin of error of a sample is given to be $4.5\sigma$, where the standard deviation $\sigma$ is given by $\sqrt{\Var(\hat{\Pi}|\pi)}$, since
\begin{equation}\label{eq:moe1}
\Prob\left(|\hat{\Pi}-\pi| \le 4.5\sqrt{\Var(\hat{\Pi}|\pi)}\right) \ge 0.95.
\end{equation}

In many practical situations, the central limit theorem is invoked to determine heuristically a margin of error.  For a random variable $G$ that is normally distributed with mean $\mu$ and variance $\sigma^2$, we have
\begin{equation}\label{eq:moe2}
\Prob(|G-\mu| \le 1.96 \sigma) \ge 0.95,
\end{equation}
hence $1.96 \sigma$ is typically taken as the margin of error in such scenarios \cite{Jac05}.  However, this non-rigorous approach only gives a loose representation of the margin of error, given that the guarantee of the central limit theorem only applies in the limit as the sample size $n$ approaches infinity.
\end{subequations}

Due to this variability in defining the margin of error of a sample, we only focus on determining the error of the estimator, $\Var(\hat{\Pi}|\pi)$, in this paper.  This error can be used to calculate the margin of error for a particular application, as outlined above.

%%%%%%%%%%%%%%%%%%%%%%%%%%%%%%%%%%%%%%%%%%%%%%%%%%%%%%%%%%%%%%%%%%%%%%%%%%%%%%%%%%%%%%%%%%%%%%%%%%%%%%%%
\subsection{Warner's RR model}\label{sc:warner}
Warner's model \cite{War65} is a specific case of the generalised model introduced in Section~\ref{sc:genrr}. Warner proposed that surveyors would present respondents with a spinner which they would spin in private to decide which one of two questions to answer. The spinner would point to a question (\eg\ ``Have you ever cheated on your spouse\slash partner?'') with probability $p_w$, and to the complement of that question (\eg\ ``Have you always been faithful to your spouse\slash partner?'') with probability $1-p_w$.  Respondents would then be asked to answer the chosen question truthfully, but without revealing which question they were answering.  As before, $x_i$ denotes the truthful response of respondent $i$, while $X_i$ denotes the randomised response, as determined by the process outlined above.

Warner's model corresponds to the case where $p_{00} = p_{11} = p_w$. We denote by $P_w$ the design matrix of Warner's model, which is given by
$$P_w = \left(\begin{array}{cc} p_w & 1-p_w \\ 1-p_w & p_w\end{array}\right),$$
while the probability mass function of each $X_i$ is defined as
\begin{align*}
\Prob(X_i = 0) &= p_w - \pi(2p_w - 1),\\
\Prob(X_i = 1) &= 1-p_w + \pi(2p_w - 1).
\end{align*}

Using the same unbiased MLE in (\ref{eq:generalmle}), we denote by $\hat{\Pi}_w$ the estimator for Warner's model and, by (\ref{eq:genvar}), find its error to be
\begin{equation}\label{eq:warnererror}
\Var(\hat{\Pi}_w(p_w)| \pi) = \frac{\frac{1}{4} - \left(p_w-\frac{1}{2}-\pi(2p_w - 1)\right)^2}{(2p_w-1)^2 n}.
\end{equation}

%%%%%%%%%%%%%%%%%%%%%%%%%%%%%%%%%%%%%%%%%%%%%%%%%%%%%%%%%%%%%%%%%%%%%%%%%%%%%%%%%%%%%%%%%%%%%%%%%%%%%%%%
%%%%%%%%%%%%%%%%%%%%%%%%%%%%%%%%%%%%%%%%%%%%%%%%%%%%%%%%%%%%%%%%%%%%%%%%%%%%%%%%%%%%%%%%%%%%%%%%%%%%%%%%
\section{Differential Privacy}\label{sc:dp}

Differential privacy was first proposed by Dwork in 2006 \cite{Dwo06} as a way to measure the level of privacy achieved when publishing data.  Using the same notation as in \cite{HLM15}, we denote by $D^m$ the space of all $m$-row datasets (let $D$ be the space of each row) and by $\md \in D^m$ a dataset in this space. We then denote by $X_\md: \Omega \to D^n$ a randomised version of $\md$.

If $D$ is assumed to be discrete, the mechanism $X_\md$ is said to satisfy ($\epsilon$,$\delta$)-differential privacy if
\begin{equation}\label{eq:dp}
\Prob(X_\md \in A) \le e^\epsilon \Prob(X_{\md^\prime} \in A) + \delta,
\end{equation}
for each $\md, \md^\prime \in D^m$ that differ in exactly one row (\ie\ there exists exactly one $j \in [m]$ such that $d_j \neq d_j^\prime$) and for each subset $A \subset D^m$.

This set-up simplifies in the case of randomised response introduced in Section~\ref{sc:prelim}.  Firstly, the datasets contain only one row ($m=1$), and the row-space is $\{0,1\}$.  We are therefore only required to show that (\ref{eq:dp}) holds for $\md \neq \md^\prime \in \{0,1\}$ and for $A=\{0\}, \{1\}$. Formally, ($\epsilon$,$\delta$)-differential privacy is satisfied if
\begin{equation}
\Prob(X_i = j) \le e^\epsilon \Prob(X_k = j) + \delta,
\end{equation}
for any $i, k \in [n]$ and $j \in \{0,1\}$.

For the RR mechanism given by (\ref{eq:designmtx}) to satisfy {\ed}-differential privacy, we require the following to hold:
\begin{subequations}\label{eq:dp0}
\begin{align}
p_{11} &\le e^\epsilon (1-p_{00}) + \delta,\label{eq:dp1}\\
p_{00} &\le e^\epsilon (1-p_{11}) + \delta,\label{eq:dp2}\\
1-p_{00} &\le e^\epsilon p_{11} + \delta,\nonumber\\
1-p_{11} &\le e^\epsilon p_{00} + \delta.\nonumber
\end{align}
\end{subequations}

We can now define the set of pairs $(p_{00}, p_{11})$ that correspond to a RR mechanism which satisfies {\ed}-differential privacy.

\begin{definition}[Region of Feasibility]
A RR mechanism, given by (\ref{eq:designmtx}), satisfies {\ed}-differential privacy if $(p_{00},p_{11})\in\mathcal{R}$, where $\mathcal{R} \subset \R^2$ is defined as
\begin{equation}
\mathcal{R}=\left\{(p_{00},p_{11})\in\R^2:\begin{array}{l} p_{00},p_{11} \in [0,1],\\
p_{00}\le e^\epsilon(1-p_{11})+\delta,\\
p_{11}\le e^\epsilon(1-p_{00})+\delta,\\
1-p_{11} \le e^\epsilon p_{00}+\delta,\\
1-p_{00} \le e^\epsilon p_{11}+\delta.\end{array}\right\}.
\end{equation}
\end{definition}

We consider the case where $p_{00}+p_{11} > 1$.  Note that the estimator error, and hence the optimal mechanism, is undefined when $p_{00}+p_{11} = 1$. If $p_{00} + p_{11} < 1$, we permute all responses such that $X^\prime_i = 1-X_i$.  This corresponds to the columns of the design matrix being swapped, giving $p^\prime_{00}=1-p_{00}$ and $p^\prime_{11}=1-p_{11}$, hence $p^\prime_{00} + p^\prime_{11}=2-p_{00}-p_{11}>1$.  We can therefore assume $p_{00} + p_{11} > 1$ without loss of generality.

When $p_{00} + p_{11} > 1$, we note that (i) $1-p_{11} < p_{00} \le e^\epsilon (1-p_{11})+\delta < e^\epsilon p_{00}+\delta$ and (ii) $1-p_{00} < p_{11} \le e^\epsilon (1-p_{00})+\delta < e^\epsilon p_{11}+\delta$. Hence, the region of feasibility simplifies to $\mathcal{R}^\prime$ as follows:
\begin{align*}
\mathcal{R}^\prime &= \left\{ (p_{00},p_{11})\in \mathcal{R} : p_{00} + p_{11} > 1 \right\}\\
&= \left\{ (p_{00},p_{11})\in\R:\begin{array}{l} p_{00}, p_{11} \le 1, \\
p_{00} + p_{11} > 1 ,\\
p_{00}\le e^\epsilon(1-p_{11})+\delta,\\
p_{11}\le e^\epsilon(1-p_{00})+\delta.\end{array}\right\}.
\end{align*}

Furthermore, we denote by $\mathcal{R}^\double$ the boundary of $\mathcal{R}^\prime$ which satisfies at least one of inequalities (\ref{eq:dp0}):
$$\mathcal{R}^\double = \mathcal{R}^\prime \setminus \left\{(p_{00}, p_{11})\in \R : \begin{array}{l} p_{00} < e^\epsilon(1-p_{11})+\delta,\\
p_{11} < e^\epsilon(1-p_{00})+\delta.\end{array}\right\}.$$
The set $\mathcal{R}^\double$ therefore consists of the union of two line segments in the unit square, where (\ref{eq:dp1}) and (\ref{eq:dp2}) are tight.

We are therefore looking to find the RR mechanism which minimises estimator error, while still being {\ed}-differentially private.  Hence, we seek to find
\begin{equation}
\argmin_{(p_{00},p_{11})\in\mathcal{R}^\prime} \Var\left(\left.\hat{\Pi}(p_{00},p_{11})\right| \pi\right).
\end{equation}

%%%%%%%%%%%%%%%%%%%%%%%%%%%%%%%%%%%%%%%%%%%%%%%%%%%%%%%%%%%%%%%%%%%%%%%%%%%%%%%%%%%%%%%%%%%%%%%%%%%%%%%%
%%%%%%%%%%%%%%%%%%%%%%%%%%%%%%%%%%%%%%%%%%%%%%%%%%%%%%%%%%%%%%%%%%%%%%%%%%%%%%%%%%%%%%%%%%%%%%%%%%%%%%%%
\section{Preliminary Results}\label{sc:prelimresults}

We begin by presenting two results which will be of use later in the paper.  The first result concerns the non-negativity of a non-linear function on the unit square.

\begin{lemma}\label{lm:fxy}
Let $f:\R \times \R \to \R$ be defined by
$$f(x, y)=2xy-x-y+1.$$
Then, $f(x,y)\ge 0$ for all $x, y \in [0,1]$.

Furthermore,
$$\argmin_{x, y \in [0,1]} f(x,y) = \{(0, 1), (1, 0)\}.$$
\end{lemma}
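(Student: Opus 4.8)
The plan is to prove the inequality $f(x,y) \ge 0$ on $[0,1]^2$ directly, and then locate the minimisers by a combination of the inequality's own structure and boundary analysis. The key observation is that $f$ factorises nicely: writing
\begin{equation*}
f(x,y) = 2xy - x - y + 1 = x(2y-1) - (y-1) = x(2y-1) + (1-y),
\end{equation*}
we see that for fixed $y$, $f$ is \emph{linear} in $x$ (and by symmetry linear in $y$ for fixed $x$). This bilinearity is the crux of the argument and immediately disposes of any interior critical points: since a bilinear function has no interior extrema on a rectangle, both the minimum and maximum must be attained on the boundary of the unit square.

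First I would prove non-negativity. The cleanest route exploits linearity in $x$: for each fixed $y \in [0,1]$, the map $x \mapsto f(x,y)$ is affine, so its minimum over $x \in [0,1]$ is attained at an endpoint $x=0$ or $x=1$. Evaluating gives $f(0,y) = 1-y \ge 0$ and $f(1,y) = 2y - 1 - y + 1 = y \ge 0$, both nonnegative for $y \in [0,1]$. Hence $f(x,y) \ge \min\{1-y, y\} \ge 0$ throughout the square, establishing the first claim.

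Next I would pin down the argmin. By the same affine-in-$x$ reasoning (and its $x$--$y$ symmetric counterpart), any minimiser lies on the boundary, and in fact on a corner: restricting to an edge, say $x=0$, reduces $f$ to the affine function $1-y$, whose minimum $0$ occurs at $y=1$; the edge $x=1$ gives $f(1,y)=y$, minimised at $y=0$; the remaining two edges are handled by symmetry. Collecting the corner values $f(0,0)=1$, $f(1,1)=1$, $f(0,1)=0$, $f(1,0)=0$ shows the global minimum value is $0$, attained exactly at $(0,1)$ and $(1,0)$. To confirm no non-corner boundary point ties this value, I would note that on each open edge $f$ is strictly monotone (nonconstant affine), so the minimum over a closed edge is achieved only at an endpoint; thus the minimisers are precisely the two claimed corners.

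I do not expect a serious obstacle here: the only thing to be careful about is the argmin claim, since proving that $\{(0,1),(1,0)\}$ is the \emph{exact} set (not merely a superset) requires ruling out the other corners and any interior/edge ties, which the strict monotonicity of $f$ along each edge handles. An alternative, perhaps even slicker, presentation would rewrite $f(x,y) = (1-x)(1-y) + xy$ as a sum of two nonnegative products, making $f \ge 0$ transparent and showing $f=0$ forces either $\{x=1, y=1\}$ simultaneously with $\{x=0 \text{ or } y=0\}$ --- an impossibility unless exactly one of $x,y$ is $0$ and the other is $1$ --- which recovers the argmin directly. I would likely present the factorisation $f(x,y)=(1-x)(1-y)+xy$ as the main tool, as it yields both conclusions with minimal calculation.
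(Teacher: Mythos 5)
Your argument is correct, and its first half is essentially the paper's proof: both exploit the fact that $f$ is affine in $x$ for fixed $y$, so that $\min_{x\in[0,1]}f(x,y)=\min\{f(0,y),f(1,y)\}=\min\{1-y,\,y\}$, which is the paper's piecewise expression (\ref{eq:fxy}) in different clothing; minimising over $y$ then gives non-negativity, and the edge-by-edge strict monotonicity pins the argmin to the two corners exactly as the paper does. Where you genuinely depart from the paper is the factorisation $f(x,y)=(1-x)(1-y)+xy$, which the paper does not use: it makes $f\ge 0$ immediate as a sum of two non-negative products, and it identifies the argmin with no monotonicity discussion at all, since $f=0$ forces $(1-x)(1-y)=0$ \emph{and} $xy=0$, i.e.\ ($x=1$ or $y=1$) and ($x=0$ or $y=0$), whose only solutions in $[0,1]^2$ are $(1,0)$ and $(0,1)$. (Your sentence describing this case analysis is garbled --- it should read as the conjunction of the two disjunctions just stated --- but the conclusion you draw from it is the right one.) The factorisation route is shorter and makes the exactness of the argmin set transparent, whereas the paper's bilinearity route has the advantage of generalising to any function affine in each variable separately; either is a complete proof.
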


	\begin{proof}
	Let's first consider $\min_{x\in [0,1]} f(x,y)$:
	\begin{align}
	\min_{x \in [0,1]} f(x,y) &= \min_{x \in [0,1]} (2xy-x)-y+1\nonumber\\
	&= \min_{x \in [0,1]} \left((2y-1)x\right)-y+1\nonumber\\
	&=\begin{cases} y & \text{if } y \le \frac{1}{2},\\
	1-y & \text{if } y > \frac{1}{2}.\label{eq:fxy}
	\end{cases}
	\end{align}
	It follows that
	$$\min_{y \in [0,1]} \left(\min_{x \in [0,1]} f(x,y)\right) = 0.$$
	By symmetry of $f$, it also follows that
	$$\min_{x \in [0,1]} \left(\min_{y \in [0,1]} f(x,y)\right) = 0,$$
	hence $f(x,y) \ge 0$ for all $x, y \in [0,1]$.
	
	We note that $f(1,0)=f(0,1)=0$, and by (\ref{eq:fxy}) we see that these values uniquely minimise $f(x,y)$ for all $x,y \in [0,1]$.
	\end{proof}

In the second result of this section we prove that an optimal mechanism exists on $\mathcal{R}^\double$ (\ie\ on the boundary of $\mathcal{R}^\prime$ where at least one of inequalities (\ref{eq:dp0}) is tight), and additionally that when $\pi \in (0,1)$, optimal mechanisms only occur on $\mathcal{R}^\double$.

\begin{lemma}\label{lm:eddp}
Let $p_{00} + p_{11} > 1$. Then there exists $(p_{00}^*, p_{11}^*) \in \argmin_{\mathcal{R}^\prime} \Var(\hat{\Pi}|\pi)$ such that $(p_{00}^*, p_{11}^*) \in \mathcal{R}^\double$.

Furthermore, when $0 < \pi < 1$, $\argmin_{\mathcal{R}^\prime} \Var(\hat{\Pi}|\pi) \subseteq \mathcal{R}^\double$.
\end{lemma}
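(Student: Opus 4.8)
The plan is to reparametrise the problem using the two quantities on which $\Var(\hat{\Pi}|\pi)$ genuinely depends. Writing $a := \Prob(X_i=0) = (1-\pi)p_{00}+\pi(1-p_{11})$ (from (\ref{eq:pmf})) and $s := p_{00}+p_{11}-1$, the error (\ref{eq:genvar}) becomes $\Var(\hat{\Pi}|\pi) = \frac{a(1-a)}{ns^2}$, since its numerator is exactly $\Prob(X_i=0)\Prob(X_i=1)=a(1-a)$. The crucial observation is that for fixed $a$ this is strictly decreasing in $s>0$, so minimising error amounts to pushing $s$ as large as feasibility allows while holding the numerator fixed. I would also record here that, because $a$ is a convex combination of $p_{00}$ and $1-p_{11}$ with strictly positive weights whenever $0<\pi<1$, the constraint $s>0$ forces $a\in(0,1)$, hence $a(1-a)>0$, on all of $\mathcal{R}^\prime$; this strict positivity is what will later upgrade ``non-increasing'' to ``strictly decreasing''.

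Next I would exhibit the constant-$a$ direction of increasing $s$. Since $\nabla a=(1-\pi,-\pi)$ and $\nabla s=(1,1)$, the direction $v:=(\pi,1-\pi)$ satisfies $v\cdot\nabla a=0$ and $v\cdot\nabla s=1>0$; thus moving from any $(p_{00},p_{11})\in\mathcal{R}^\prime$ along $t\mapsto(p_{00},p_{11})+tv$ keeps the numerator fixed while increasing $s$, so $\Var(\hat{\Pi}|\pi)$ is non-increasing along this ray (strictly decreasing when $0<\pi<1$). As $\mathcal{R}^\prime$ is bounded and $s$ grows at unit rate, the ray meets $\partial\mathcal{R}^\prime$ at a finite parameter; since $s$ only increased, the exit point cannot lie on the excluded edge $\{s=0\}$, so it lies either on $\mathcal{R}^\double$ (a tight privacy constraint) or on one of the box edges $p_{00}=1$, $p_{11}=1$.

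The main obstacle is precisely this last alternative: the constant-$a$ ray can leave $\mathcal{R}^\prime$ through a box edge rather than through $\mathcal{R}^\double$. To close the gap I would treat those edges directly. On $\{p_{11}=1\}$ feasibility forces $p_{00}\in(0,\delta]$ with $s=p_{00}$ and $a=(1-\pi)p_{00}$, giving $\Var(\hat{\Pi}|\pi)=\frac{(1-\pi)\bigl(1-(1-\pi)p_{00}\bigr)}{n\,p_{00}}$, which is strictly decreasing in $p_{00}$; hence sliding along the edge to its endpoint $(\delta,1)$ strictly lowers the error, and $(\delta,1)\in\mathcal{R}^\double$ since (\ref{eq:dp2}) is tight there. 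The edge $\{p_{00}=1\}$ is handled symmetrically, terminating at $(1,\delta)\in\mathcal{R}^\double$ where (\ref{eq:dp1}) is tight. Thus every point of $\mathcal{R}^\prime$ is joined to a point of $\mathcal{R}^\double$ by a path along which $\Var(\hat{\Pi}|\pi)$ is non-increasing.

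Finally I would assemble the two claims. Since $\mathcal{R}^\double$ is compact and $\Var(\hat{\Pi}|\pi)$ is continuous on it, the error attains a minimum at some $(p_{00}^*,p_{11}^*)\in\mathcal{R}^\double$; by the path construction $\Var(\hat{\Pi}|\pi)\ge\Var(\hat{\Pi}(p_{00}^*,p_{11}^*)|\pi)$ for every feasible point, which proves the first assertion for all $\pi$. For $0<\pi<1$ each leg of the path strictly decreases the error unless the starting point already lies in $\mathcal{R}^\double$, so no point of $\mathcal{R}^\prime\setminus\mathcal{R}^\double$ can be optimal, giving $\argmin_{\mathcal{R}^\prime}\Var(\hat{\Pi}|\pi)\subseteq\mathcal{R}^\double$. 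The degenerate endpoints $\pi\in\{0,1\}$ bear only on the first assertion and are immediate: there $a(1-a)$ already vanishes at a vertex of $\mathcal{R}^\double$ (\eg\ $(1,\delta)$ when $\pi=0$ and $(\delta,1)$ when $\pi=1$), so the global minimum value $0$ is achieved on $\mathcal{R}^\double$.
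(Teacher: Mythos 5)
Your proof is correct in its essentials but takes a genuinely different route from the paper's. Both arguments share the same high-level strategy — move each feasible point along a variance-non-increasing path into $\mathcal{R}^\double$ — but the paper implements it by computing $\frac{\partial \Var(\hat{\Pi}|\pi)}{\partial p_{00}}$ and $\frac{\partial \Var(\hat{\Pi}|\pi)}{\partial p_{11}}$ explicitly and invoking Lemma~\ref{lm:fxy} (non-negativity of $2xy-x-y+1$ on the unit square) to show both are $\le 0$, strictly so for $\pi\in(0,1)$; since every non-negative displacement is then non-increasing, the box edges $p_{00}=1$ and $p_{11}=1$ need no separate treatment. Your factorisation $\Var(\hat{\Pi}|\pi)=a(1-a)/(ns^2)$, with motion along the level set of $a$ in the direction $(\pi,1-\pi)$, makes the monotonicity immediate and dispenses with both the derivative algebra and Lemma~\ref{lm:fxy}, at the cost of having to handle the box edges by hand — which you do correctly, terminating at $(1,\delta)$ and $(\delta,1)$ where (\ref{eq:dp1}) and (\ref{eq:dp2}) respectively are tight. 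The one place needing repair is $\delta=0$: there the endpoints $(1,0)$ and $(0,1)$ violate $p_{00}+p_{11}>1$, so $\mathcal{R}^\double$ is a pair of half-open segments and is \emph{not} compact, which invalidates both your appeal to compactness for the existence of a minimiser and your closing claim that the value $0$ is attained at $(1,\delta)$ when $\pi=0$. The patch is routine — $\Var(\hat{\Pi}|\pi)\to+\infty$ as one approaches the excluded endpoints along $\mathcal{R}^\double$, so the minimum over $\mathcal{R}^\double$ is still attained — but as written your final paragraph only covers $\delta>0$, whereas the lemma is also invoked with $\delta=0$ in Theorem~\ref{th:edp}.
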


	\begin{proof}
	Let's consider $\frac{\partial \Var(\hat{\Pi}| \pi)}{\partial p_{00}}$ and $\frac{\partial \Var(\hat{\Pi}| \pi)}{\partial p_{11}}$.
	
	Firstly, after some rearranging\slash manipulation,
	$$\frac{\partial \Var(\hat{\Pi}| \pi)}{\partial p_{11}} = -\frac{2p_{00}(1-p_{00})(1-\pi)+\pi(2p_{00}p_{11}-p_{00}-p_{11}+1)}{(p_{00}+p_{11}-1)^3n}.$$
	By Lemma~\ref{lm:fxy}, we know that $2p_{00}p_{11}-p_{00}-p_{11}+1 \ge 0$, and since $p_{00}+p_{11} - 1 > 0$ by hypothesis, we conclude that $\frac{\partial \Var(\hat{\Pi}| \pi)}{\partial p_{11}} \le 0$.
	
	We further note that $2p_{00}p_{11}-p_{00}-p_{11}+1 > 0$ by Lemma~\ref{lm:fxy}, since the assumption that $p_{00}+p_{11} > 1$ means $p_{00},p_{11} > 0$.  Hence $\frac{\partial \Var(\hat{\Pi}| \pi)}{\partial p_{11}} = 0$ only when $\pi = 0$ and $p_{00} = 1$.  Equivalently,
	\begin{equation}\label{eq:min1}
	\frac{\partial \Var(\hat{\Pi}| \pi)}{\partial p_{11}} < 0 \text{ when } \pi > 0 \text{ or } p_{00} < 1.
	\end{equation}
	
	Secondly, after some rearranging\slash manipulation,
	$$\frac{\partial \Var(\hat{\Pi}| \pi)}{\partial p_{00}} = 
	-\frac{(2p_{00}p_{11}-p_{00}-p_{11}+1)(1-\pi)+2p_{11}\pi(1-p_{11})}{(p_{00}+p_{11}-1)^3 n}.$$
	Since, by assumption, we have $2p_{00}p_{11}-p_{00}-p_{11}+1 \ge 0$ and since $p_{11} \in [0,1]$, we see that $\frac{\partial \Var(\hat{\Pi}| \pi)}{\partial p_{00}} \le 0$.
	
	Similar to the reasoning above, since $2p_{00}p_{11}-p_{00}-p_{11}+1 > 0$ and $p_{11} > 0$, $\frac{\partial \Var(\hat{\Pi}| \pi)}{\partial p_{00}} = 0$ only when $\pi = 1$ and $p_{11} = 1$. Equivalently,
	\begin{equation}\label{eq:min2}
	\frac{\partial \Var(\hat{\Pi}| \pi)}{\partial p_{00}} < 0 \text{ when } \pi < 1 \text{ or } p_{11} < 1.
	\end{equation}
	
	Since $\frac{\partial \Var(\hat{\Pi}| \pi)}{\partial p_{00}} \le 0$ and $\frac{\partial \Var(\hat{\Pi}| \pi)}{\partial p_{11}} \le 0$, there exists a mechanism on the boundary of $\mathcal{R}^\prime$ which minimises the estimator error, \ie
	\begin{equation}\label{eq:dpconstr}
	\partial\mathcal{R}^\prime \cap \left(\argmin_{(p_{00},p_{11})\in\mathcal{R}^\prime} \Var(\hat{\Pi}(p_{00},p_{11})| \pi)\right) \neq \emptyset.
	\end{equation}
	
	However, if $0 < \pi < 1$, we see from (\ref{eq:min1}) and (\ref{eq:min2}) that $\frac{\partial \Var(\hat{\Pi}| \pi)}{\partial p_{00}} < 0$ and $\frac{\partial \Var(\hat{\Pi}| \pi)}{\partial p_{11}} < 0$. Hence,
	\begin{equation}\label{eq:dpconstr2}
	\argmin_{(p_{00},p_{11})\in\mathcal{R}^\prime} \Var(\hat{\Pi}(p_{00},p_{11})| \pi) \subseteq \partial \mathcal{R}^\prime,
	\end{equation}
	\ie\ the optimal mechanisms \emph{only} occur on the boundary of $\mathcal{R}^\prime$.

	Finally, suppose $(p_{00},p_{11}) \in \partial\mathcal{R}^\prime$, but neither of the inequalities in (\ref{eq:dp0}) are tight.  Then there exist $\Delta_0,\Delta_1 \ge 0$, $\Delta_0+\Delta_1>0$ where $(p_{00}+\Delta_0,p_{11}+\Delta_1)\in\partial\mathcal{R}^\prime$, but because $\frac{\partial \Var(\hat{\Pi}| \pi)}{\partial p_{00}} \le 0$ and $\frac{\partial \Var(\hat{\Pi}| \pi)}{\partial p_{11}} \le 0$, then $\Var(\hat{\Pi}(p_{00},p_{11})| \pi) \ge \Var(\hat{\Pi}(p_{00}+\Delta_0,p_{11}+\Delta_1)| \pi)$. Hence minimal error is achieved when at least one of the inequalities (\ref{eq:dp0}) is tight, \ie
	\begin{align*}
	&\argmin_{(p_{00},p_{11})\in\mathcal{R}^\prime} \Var(\hat{\Pi}(p_{00},p_{11})| \pi) \subseteq \mathcal{R}^\double.\qedhere
	\end{align*}
	\end{proof}

For the remainder of this paper, we assume $\pi \in (0,1)$.  Note that the results on optimal mechanisms still hold for $\pi \in [0,1]$, however these optima may not be unique.

%%%%%%%%%%%%%%%%%%%%%%%%%%%%%%%%%%%%%%%%%%%%%%%%%%%%%%%%%%%%%%%%%%%%%%%%%%%%%%%%%%%%%%%%%%%%%%%%%%%%%%%%
%%%%%%%%%%%%%%%%%%%%%%%%%%%%%%%%%%%%%%%%%%%%%%%%%%%%%%%%%%%%%%%%%%%%%%%%%%%%%%%%%%%%%%%%%%%%%%%%%%%%%%%%
\section{Optimal Mechanism for \texorpdfstring{$\epsilon$-}{Strict }Differential Privacy}\label{sc:optsdp}
We have already established that the parameters for the optimal {\ed}-differentially private mechanism lie on $\mathcal{R}^\double$.  We now examine the case of $\epsilon$-differential privacy, where $\delta=0$, with the additional assumption that $\epsilon > 0$.

\begin{theorem}\label{th:edp}
Let $\pi \in (0,1)$, $p_{00} + p_{11} > 1$  and $\epsilon > 0$.  The $\epsilon$-differentially private RR mechanism which minimises estimator error is given by the design matrix
$$P_\epsilon = \left(\begin{array}{cc} \frac{e^\epsilon}{e^\epsilon+1} & \frac{1}{e^\epsilon+1} \\ \frac{1}{e^\epsilon+1} & \frac{e^\epsilon}{e^\epsilon+1} \end{array}\right).$$
\end{theorem}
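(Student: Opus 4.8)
The plan is to invoke Lemma~\ref{lm:eddp} to restrict attention to the boundary set $\mathcal{R}^\double$ and then minimise the error along it. With $\delta = 0$, $\mathcal{R}^\double$ is the union of the two segments
$$S_1 = \{p_{00} = e^\epsilon(1-p_{11})\}, \qquad S_2 = \{p_{11} = e^\epsilon(1-p_{00})\}$$
(intersected with $\mathcal{R}^\prime$), on which (\ref{eq:dp2}) and (\ref{eq:dp1}) are respectively tight. First I would verify that the claimed point $p_{00}=p_{11}=\frac{e^\epsilon}{e^\epsilon+1}$ makes \emph{both} (\ref{eq:dp1}) and (\ref{eq:dp2}) tight, so it lies in $\mathcal{R}^\double$ and is precisely the common endpoint $S_1\cap S_2$. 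A quick check of the remaining constraints ($p_{11}\le 1$, $p_{00}+p_{11}>1$, and the non-tight inequality) then shows that along $S_2$ the admissible range is $p_{00}\in[\frac{e^\epsilon}{e^\epsilon+1},1)$, with the symmetric point as its left endpoint; the situation on $S_1$ is analogous in $p_{11}$.

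Next I would establish that the error is strictly monotone along each segment, forcing the minimum to the shared endpoint. On $S_2$ I would parametrise by $p_{00}$, using $p_{11}=e^\epsilon(1-p_{00})$ so that $\frac{dp_{11}}{dp_{00}}=-e^\epsilon$, and compute the total derivative
$$\frac{d}{dp_{00}}\Var(\hat{\Pi}|\pi)\Big|_{S_2} = \frac{\partial \Var}{\partial p_{00}} - e^\epsilon\,\frac{\partial \Var}{\partial p_{11}}$$
from the two partial derivatives already obtained in the proof of Lemma~\ref{lm:eddp}. Collecting terms by $(1-\pi)$ and $\pi$ and substituting $p_{11}=e^\epsilon(1-p_{00})$ with $q=1-p_{00}$, the $(1-\pi)$-coefficient $2p_{00}p_{11}-p_{00}-p_{11}+1-2e^\epsilon p_{00}(1-p_{00})$ collapses to $q(1-e^\epsilon)$, and the $\pi$-coefficient $2p_{11}(1-p_{11})-e^\epsilon(2p_{00}p_{11}-p_{00}-p_{11}+1)$ collapses to $e^\epsilon q(1-e^\epsilon)$, so the numerator factors as $q(1-e^\epsilon)[1-\pi+\pi e^\epsilon]$. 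Since $\epsilon>0$ forces $1-e^\epsilon<0$, while $q>0$ and $1-\pi+\pi e^\epsilon>0$ for $\pi\in(0,1)$, this numerator is negative; dividing by the positive factor $(p_{00}+p_{11}-1)^3 n$ shows the total derivative is strictly positive. Hence $\Var(\hat{\Pi}|\pi)$ is strictly increasing along $S_2$ and is minimised at its left endpoint, the symmetric point.

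For $S_1$ I would avoid repeating the computation by exploiting a symmetry of the error: under $(p_{00},p_{11},\pi)\mapsto(p_{11},p_{00},1-\pi)$ the two probabilities $\Prob(X_i=0)$ and $\Prob(X_i=1)$ are interchanged, so (\ref{eq:genvar}) is invariant, and this map carries $S_1$ onto $S_2$. Strict monotonicity on $S_2$, valid for every $\pi\in(0,1)$, therefore transfers to strict monotonicity on $S_1$, again with the minimum at the symmetric endpoint. Combining the two segments, the error over $\mathcal{R}^\double$ is uniquely minimised at $p_{00}=p_{11}=\frac{e^\epsilon}{e^\epsilon+1}$, which by Lemma~\ref{lm:eddp} is a global minimiser over $\mathcal{R}^\prime$; this yields $P_\epsilon$.

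The step I expect to be the main obstacle is the sign analysis of the directional derivative in the second paragraph. Individually both partial derivatives are negative by Lemma~\ref{lm:eddp}, so the relevant combination $\frac{\partial \Var}{\partial p_{00}}-e^\epsilon\frac{\partial \Var}{\partial p_{11}}$ is an indeterminate difference of a negative and a positive quantity, and the whole argument hinges on the algebraic cancellation that produces the clean common factor $q(1-e^\epsilon)$. I would therefore carry out that simplification carefully, since it is the one point at which a sign or bookkeeping error would derail the conclusion.
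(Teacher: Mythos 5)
Your proposal is correct and follows essentially the same route as the paper: both reduce to the boundary via Lemma~\ref{lm:eddp}, establish strict monotonicity of the error along each of the two line segments of $\mathcal{R}^\double$ (the paper via an explicit parametrisation $t\mapsto(r(t),s(t))$ with $\tfrac{d}{dt}\Var<0$, you via the chain rule applied to the partial derivatives from Lemma~\ref{lm:eddp}), and handle the second segment by the same symmetry $\Var(\hat{\Pi}(p_{00},p_{11})\mid\pi)=\Var(\hat{\Pi}(p_{11},p_{00})\mid 1-\pi)$. Your key algebraic step checks out --- with $q=1-p_{00}$ the directional derivative along $S_2$ has numerator $q(e^\epsilon-1)(1-\pi+\pi e^\epsilon)>0$ once the overall minus sign in the stated partials is tracked --- so the minimum lands at the common symmetric endpoint $p_{00}=p_{11}=\frac{e^\epsilon}{e^\epsilon+1}$ exactly as in the paper.
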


	\begin{proof}
	By Lemma~\ref{lm:eddp}, we know that the parameters $(p_{00}, p_{11})$ of the optimal mechanism exist on the boundary of $\mathcal{R}^\prime$, with at least one of the inequalities (\ref{eq:dp0}) tight.  We now separately consider the cases where (\ref{eq:dp1}) and (\ref{eq:dp2}) are tight.  By hypothesis, $\delta=0$ and $\epsilon \neq 0$. 
	
	\begin{enumerate}
	\item (\ref{eq:dp1}) tight: $p_{11}=e^\epsilon(1-p_{00})$, constrained by $p_{11}\ge 0$ and $p_{00}\le e^\epsilon(1-p_{11})$. By (\ref{eq:dp2}) and since $p_{00} = 1-e^{-\epsilon}p_{11}$, we have
	\begin{align*}
	e^\epsilon p_{11} &\le e^\epsilon - p_{00}\\
	&=e^\epsilon-(1-e^{-\epsilon}p_{11})\\
	&=e^\epsilon-1+e^{-\epsilon}p_{11},
	\end{align*}
	which we rewrite as
	$$p_{11}(e^\epsilon - e^{-\epsilon})\le e^\epsilon -1,$$
	and noting that $e^{2\epsilon}-1=(e^\epsilon-1)(e^\epsilon+1)$, we see that
	\begin{align*}
	p_{11} &\le \frac{e^\epsilon-1}{e^{-\epsilon}(e^{2\epsilon}-1)}\\
	&= \frac{e^\epsilon}{e^\epsilon+1}.
	\end{align*}
	
	We are therefore considering $\Var(\hat{\Pi}(p_{00},p_{11})| \pi)$ on the line $p_{00} = 1-e^{-\epsilon}p_{11}$ for $0 \le p_{11} \le \frac{e^\epsilon}{e^\epsilon+1}$.  We parametrise this line as follows, where $0 < t \le 1$, $p_{00} = r(t)$ and $p_{11} = s(t)$ (we require $t>0$ since $p_{00}+p_{11} > 1$):
	\begin{equation}
	\begin{aligned}\label{eq:edpparam}
	r(t) &= (1-t)+\frac{e^\epsilon}{1+e^\epsilon}t=1-e^{-\epsilon}s(t),\\
	s(t) &= \frac{e^\epsilon}{1+e^\epsilon}t.
	\end{aligned}
	\end{equation}
	For simplicity, we let $\hat{\Pi}(r(t), s(t))=\hat{\Pi}_1(t)$. After some manipulation, we see that
	$$\frac{\partial \Var(\hat{\Pi}_1(t)| \pi)}{\partial t} = -\frac{(1+e^\epsilon)(1+\pi(e^\epsilon-1))}{(e^\epsilon-1)^2t^2n},$$
	and noting that $e^\epsilon > 1$, we see that $\frac{\partial \Var(\hat{\Pi}_1(t)| \pi)}{\partial t}<0$.  Hence,
	\begin{equation}\label{eq:argmin1}
	\argmin_{t \in (0,1]} \Var(\hat{\Pi}_1(t)| \pi) = \{1\}.
	\end{equation}
	
	\item (\ref{eq:dp2}) tight: By symmetry of the equations (\ref{eq:dp0}), we simply let $p_{00}=s(t)$ and $p_{11}=r(t)$.  By examining (\ref{eq:pmf}) and (\ref{eq:genvar}), we see that
	$$\Var(\hat{\Pi}(p_{00}, p_{11})| 1-\pi) = \Var(\hat{\Pi}(p_{11}, p_{00})| \pi),$$
	and by letting $\hat{\Pi}(s(t), r(t))=\hat{\Pi}_2(t)$, we get
	$$\frac{\partial \Var(\hat{\Pi}_2(t)| \pi)}{\partial t} = -\frac{(1+e^\epsilon)(1+(1-\pi)(e^\epsilon-1)}{(e^\epsilon-1)^2t^2n}.$$
	Again it follows that $\frac{\partial \Var(\hat{\Pi}_2(t)| \pi)}{\partial t}<0$, and so
	\begin{equation}\label{eq:argmin2}
	\argmin_{t \in (0,1]} \Var(\hat{\Pi}_2(t)| \pi) = \{1\}.
	\end{equation}
	\end{enumerate}
	
	By (\ref{eq:dpconstr2}), (\ref{eq:argmin1}) and (\ref{eq:argmin2}), we can now conclude that
	\begin{equation*}
	\argmin_{(p_{00},p_{11})\in\mathcal{R}^\prime} \Var(\hat{\Pi}(p_{00},p_{11})| \pi)= \left\{\left(\frac{e^\epsilon}{e^\epsilon+1}, \frac{e^\epsilon}{e^\epsilon+1}\right)\right\},
	\end{equation*}
	and so the result follows.
	\end{proof}

\textbf{Remark:} When $\epsilon=0$, all rows of the design matrix must be identical, \ie\ $p_{00}=1-p_{11}$ and $p_{11}=1-p_{00}$.  This gives $p_{00}+p_{11} = 1$, leading to an unbounded estimator error (\ref{eq:genvar}).  In practical terms, \num{0}-differential privacy enforces the same output distribution for every respondent, hence nothing meaningful can be learned.

%%%%%%%%%%%%%%%%%%%%%%%%%%%%%%%%%%%%%%%%%%%%%%%%%%%%%%%%%%%%%%%%%%%%%%%%%%%%%%%%%%%%%%%%%%%%%%%%%%%%%%%%
%%%%%%%%%%%%%%%%%%%%%%%%%%%%%%%%%%%%%%%%%%%%%%%%%%%%%%%%%%%%%%%%%%%%%%%%%%%%%%%%%%%%%%%%%%%%%%%%%%%%%%%%
\section{Optimal Mechanism for \texorpdfstring{{\ed}-}{Relaxed }Differential Privacy}\label{sc:optrdp}

Let's now consider the optimal mechanism for {\ed}-differential privacy.  We parametrise $\mathcal{R}^\double$ as follows.  If we let
\begin{equation}
\begin{aligned}\label{eq:eddpparam}
r_\delta(t) &= \left(1+e^{-\epsilon}\delta\right)(1-t)+\frac{e^\epsilon+\delta}{e^\epsilon+1}t,\\
&=1-e^{-\epsilon}(s_\delta(t)-\delta),\\
s_\delta(t) &= \frac{e^\epsilon+\delta}{e^\epsilon+1} t,
\end{aligned}
\end{equation}
for $t\in [0,1]$, then the boundary where (\ref{eq:dp1}) holds is parametrised by $p_{00} = r_\delta(t)$ and $p_{11} = s_\delta(t)$; by symmetry, the boundary where (\ref{eq:dp2}) holds is parametrised by $p_{00} = s_\delta(t)$ and $p_{11} = r_\delta(t)$.

We note that $t=1$ denotes an extreme point of $\mathcal{R}^\prime$ (and $\mathcal{R}^\double$), the point at which both inequalities (\ref{eq:dp0}) are tight.  Here $p_{00} = p_{11} = r_\delta(1) = s_\delta(1)= \frac{e^\epsilon+\delta}{e^\epsilon+1}$.

%%%%%%%%%%%%%%%%%%%%%%%%%%%%%%%%%%%%%%%%%%%%%%%%%%%%%%%%%%%%%%%%%%%%%%%%%%%%%%%%%%%%%%%%%%%%%%%%%%%%%%%%
\subsection{Preliminary Lemmas}

Before proceeding to the main result of this section, we first present a collection of lemmas for later use.  The first result states that the minimal variance of $\hat{\Pi}$ on $\mathcal{R}^\double$ will occur at one of its extreme points (\ie\ at one of the endpoints of the two line segments which comprise $\mathcal{R}^\double$).

\begin{lemma}\label{lm:argminendpts}
Let $r_\delta$ and $s_\delta$ be given by (\ref{eq:eddpparam}), let $\delta > 0$ and let $a \le b \in [0,1]$. Then,
$$\argmin_{t \in [a,b]} \Var(\hat{\Pi}(r_\delta(t), s_\delta(t))| \pi) \subseteq \{a,b\}.$$
\end{lemma}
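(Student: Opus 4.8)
The plan is to show that the error, regarded as a function of the single scalar parameter $t$ along the line segment, is strictly quasiconcave; since a strictly quasiconcave function on an interval cannot attain its minimum at an interior point, the minimiser is forced to an endpoint. First I would rewrite the objective in a more tractable form. Setting $D(t) = r_\delta(t) + s_\delta(t) - 1$ and $u(t) = \Prob(X_i=0) = r_\delta(t) - \pi D(t)$, the error formula from the Corollary reads
$$\Var\bigl(\hat{\Pi}(r_\delta(t), s_\delta(t))\,\big|\,\pi\bigr) = \frac{u(t)\bigl(1 - u(t)\bigr)}{n\,D(t)^2}.$$
Because $r_\delta$ and $s_\delta$ are affine in $t$, both $u$ and $D$ are affine in $t$; and since $\epsilon > 0$ gives $D$ a nonzero slope, $u$ is in fact an affine function of $D$, say $u = \alpha + \beta D$.

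The crux is the substitution $w = 1/D(t)$. A short computation turns the objective into
$$n\,\Var(\hat{\Pi}\,|\,\pi) = (\alpha w + \beta)\bigl((1-\alpha)w - \beta\bigr),$$
a quadratic in $w$ whose leading coefficient is $\alpha(1-\alpha)$. Reading $\alpha$ off from (\ref{eq:eddpparam}) yields $\alpha = \frac{e^\epsilon + \delta - 1}{e^\epsilon - 1}$, and here the hypothesis $\delta > 0$ is decisive: it is equivalent to $\alpha > 1$, so $\alpha(1-\alpha) < 0$ and the quadratic is \emph{strictly concave} in $w$. (When $\delta = 0$ the leading coefficient vanishes, the objective is monotone, and one recovers the single-endpoint behaviour of Theorem~\ref{th:edp}.)

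It remains to assemble the conclusion. Since $\delta > 0$, the affine function $D(t)$ has positive slope and satisfies $D(0) = e^{-\epsilon}\delta > 0$, so $D$ is strictly positive and strictly increasing on $[0,1]$ and $w(t) = 1/D(t)$ is a smooth, strictly monotone map of $t$. The objective is therefore the composition of a strictly concave quadratic in $w$ with a strictly monotone change of variable $t \mapsto w(t)$. Such a composition is strictly quasiconcave in $t$: the upper level sets of the concave quadratic are intervals, and their preimages under the continuous monotone map $w(\cdot)$ are again intervals. Consequently, for any $a \le b$ and any interior point $t \in (a,b)$ one has $\Var(\hat{\Pi}(r_\delta(t),s_\delta(t))\,|\,\pi) > \min\{\,\Var(\cdot|\pi)\text{ at }a,\ \Var(\cdot|\pi)\text{ at }b\,\}$, which is exactly $\argmin_{t\in[a,b]}\Var(\hat{\Pi}(r_\delta(t),s_\delta(t))\,|\,\pi) \subseteq \{a,b\}$.

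The main obstacle is the second step: recognising that the reciprocal substitution $w = 1/D$ linearises the rational objective into a genuine quadratic, and then pinning down the sign of its leading coefficient. Everything hinges on $\alpha(1-\alpha) < 0$, equivalently $\alpha \notin [0,1]$, which is precisely where $\delta > 0$ must be used. A more computational alternative avoids the substitution and differentiates directly: $\partial\Var/\partial t$ is a rational function whose numerator is affine in $D$ and hence changes sign at most once, so there is at most one interior critical point, and the concavity in $w$ confirms that this critical point is a maximum rather than a minimum — again delivering that minima occur only at endpoints.
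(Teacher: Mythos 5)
Your proof is correct, but it takes a genuinely different route from the paper's. The paper works directly with $\frac{\partial}{\partial t}\Var(\hat{\Pi}(r_\delta(t),s_\delta(t))|\pi)$: it observes (by symbolic manipulation) that the numerator of this derivative is linear in $t$, so there is at most one interior critical point, and then substitutes back to show the second derivative is strictly negative there, so the unique critical point is a maximum and the minimum must sit at an endpoint — essentially the ``computational alternative'' you sketch in your closing paragraph. Your argument instead exposes the structure behind those computations: writing $\Var = u(1-u)/(nD^2)$ with $u = r_\delta - \pi D$ affine in $D = r_\delta + s_\delta - 1$, the substitution $w = 1/D$ turns the objective into the quadratic $(\alpha w+\beta)\bigl((1-\alpha)w-\beta\bigr)$ with leading coefficient $\alpha(1-\alpha)$, and your value $\alpha = \frac{e^\epsilon+\delta-1}{e^\epsilon-1}$ checks out, so $\delta>0$ (together with the standing assumption $\epsilon>0$) gives $\alpha>1$ and strict concavity in $w$; composing with the strictly monotone, well-defined map $t\mapsto 1/D(t)$ (well-defined because $D(0)=e^{-\epsilon}\delta>0$) yields strict quasiconcavity in $t$, which forces minimisers to the endpoints. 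What your approach buys is transparency — the role of $\delta>0$ is reduced to the single inequality $\alpha>1$, the $\delta=0$ degeneration to the monotone behaviour of Theorem~\ref{th:edp} falls out for free, and no unverifiable closed-form root or second-derivative expression is needed; what the paper's approach buys is explicit formulas for the location of the interior maximum. One small caveat: the lemma as stated does not repeat the hypothesis $\epsilon>0$, which you need for $D$ to have nonzero slope (and the paper needs implicitly as well); it is part of the standing assumptions of the surrounding sections, so this is not a real gap.
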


	\begin{proof}
	For simplicity, we denote $\hat{\Pi}(r_\delta(t), s_\delta(t))$ by $\hat{\Pi}_{1,\delta}(t)$.
	
	By some manipulation, it can be shown that the numerator of $\frac{\partial \Var(\hat{\Pi}_{1,\delta}(t)| \pi)}{\partial t}$ is linear in $t$, hence it has at most one root at
$$t = \frac{\delta(1+e^\epsilon)(2e^\epsilon+2\delta-1-\pi(e^\epsilon+2\delta-1))}{(e^\epsilon+\delta)(e^\epsilon+2\delta-1)(1+(e^\epsilon-1)\pi)}.$$

	By substitution, we find that
	$$\frac{\partial^2 \Var(\hat{\Pi}_{1,\delta}(t)| \pi)}{\partial t^2} = -\frac{(e^\epsilon+\delta)^2(e^\epsilon+2\delta-1)^4(1+(e^\epsilon-1)\pi)^4}{8e^{2\epsilon}\delta^3(e^\epsilon+\delta-1)^3(1+e^\epsilon)^2n},$$
	when $\frac{\partial \Var(\hat{\Pi}_{1,\delta}(t)| \pi)}{\partial t}=0$. By inspection, and since $\delta > 0$, we see that $\frac{\partial^2 \Var(\hat{\Pi}_{1,\delta}(t)| \pi)}{\partial t^2} < 0$ when $\frac{\partial \Var(\hat{\Pi}_{1,\delta}(t)| \pi)}{\partial t}=0$, and so this point is the maximum of $\Var(\hat{\Pi}_{1,\delta}(t)| \pi)$.  Hence, the minimum of $\Var(\hat{\Pi}_{1,\delta}(t)| \pi)$ cannot occur at a mid-point of an interval.  The result follows.
	\end{proof}

We next show that the error of $\hat{\Pi}$ along the boundary constrained by (\ref{eq:dp1}) is uniformly greater than along the boundary constrained by (\ref{eq:dp2}) when $\pi \le \frac{1}{2}$.

\begin{lemma}\label{lm:argminrssr}
Let $r_\delta$ and $s_\delta$ be given by (\ref{eq:eddpparam}) and let $\delta > 0$. Then, when $\pi \le \frac{1}{2}$,
$$\Var(\hat{\Pi}(r_\delta(t), s_\delta(t))| \pi) \le \Var(\hat{\Pi}(s_\delta(t), r_\delta(t))| \pi),$$
for $t \in [0,1]$.

Conversely, if $\pi \ge \frac{1}{2}$, then
$$\Var(\hat{\Pi}(r_\delta(t), s_\delta(t))| \pi) \ge \Var(\hat{\Pi}(s_\delta(t), r_\delta(t))| \pi),$$
for $t \in [0,1]$.
\end{lemma}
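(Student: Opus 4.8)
The plan is to exploit the fact that the \emph{denominator} of the error expression (\ref{eq:genvar}) is symmetric under interchange of $p_{00}$ and $p_{11}$, depending on the pair only through $p_{00}+p_{11}-1$. Consequently the two quantities $\Var(\hat{\Pi}(r_\delta(t), s_\delta(t))| \pi)$ and $\Var(\hat{\Pi}(s_\delta(t), r_\delta(t))| \pi)$ share the same positive denominator $(r_\delta(t)+s_\delta(t)-1)^2 n$, so the comparison reduces entirely to their numerators. I would abbreviate $s = r_\delta(t)+s_\delta(t)-1$ and set $A = r_\delta(t)-\tfrac12-\pi s$ and $B = s_\delta(t)-\tfrac12-\pi s$, so that the two numerators are $\tfrac14 - A^2$ and $\tfrac14 - B^2$ respectively. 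Since a larger squared term yields a smaller numerator (and hence smaller variance), the two desired inequalities are equivalent to sign statements about $A^2-B^2$.

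Next I would factor $A^2 - B^2 = (A-B)(A+B)$ and evaluate the two factors directly from the parametrisation (\ref{eq:eddpparam}). One finds $A - B = r_\delta(t) - s_\delta(t)$, while the $\pi s$ terms and the two $-\tfrac12$ terms combine to give $A + B = (r_\delta(t)+s_\delta(t)-1) - 2\pi s = s(1-2\pi)$. Hence
\begin{equation*}
A^2 - B^2 = \bigl(r_\delta(t)-s_\delta(t)\bigr)\,s\,(1-2\pi),
\end{equation*}
and the whole argument turns on the signs of these three factors.

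The key computation is that $r_\delta(t) - s_\delta(t) = (1+e^{-\epsilon}\delta)(1-t)$, which follows at once by cancelling the common $\tfrac{e^\epsilon+\delta}{e^\epsilon+1}t$ term in (\ref{eq:eddpparam}); this is non-negative for $t \in [0,1]$. I would likewise confirm that $s = r_\delta(t)+s_\delta(t)-1 > 0$ throughout $[0,1]$: the sum $r_\delta+s_\delta$, and hence $s$, is affine in $t$, with $s(0) = e^{-\epsilon}\delta > 0$ and $s(1) = \tfrac{e^\epsilon+2\delta-1}{e^\epsilon+1} > 0$, so affinity forces $s>0$ on the whole interval. With $r_\delta-s_\delta \ge 0$ and $s>0$ established, the sign of $A^2-B^2$ is exactly the sign of $1-2\pi$: when $\pi \le \tfrac12$ we get $A^2 \ge B^2$, so $\tfrac14 - A^2 \le \tfrac14 - B^2$ and the first inequality follows; when $\pi \ge \tfrac12$ the reverse holds, yielding the second.

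I do not anticipate a genuine obstacle here. The only points requiring care are verifying that the shared denominator $s^2$ is \emph{strictly} positive over the full parameter range $[0,1]$ (so that dividing through preserves the inequality direction), rather than merely at the feasible mechanisms, and checking that the cancellation giving $r_\delta(t) - s_\delta(t) = (1+e^{-\epsilon}\delta)(1-t)$ is carried out cleanly; both are routine once the symmetry of the denominator is recognised.
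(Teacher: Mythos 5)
Your proof is correct and follows essentially the same route as the paper: both arguments reduce the lemma to determining the sign of $\Var(\hat{\Pi}(r_\delta(t),s_\delta(t))|\pi)-\Var(\hat{\Pi}(s_\delta(t),r_\delta(t))|\pi)$, which the paper states ``after manipulation'' equals $-\tfrac{(e^\epsilon+1)(e^\epsilon+\delta)(1-2\pi)(1-t)}{(e^\epsilon(e^\epsilon-1)t+\delta(1-t+e^\epsilon(1+t)))n}$ and your factorisation $A^2-B^2=(r_\delta-s_\delta)(r_\delta+s_\delta-1)(1-2\pi)$ reproduces exactly, making the paper's implicit algebra explicit. No gaps.
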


	\begin{proof}
	After manipulation of the terms, we can show that
	$$\Var(\hat{\Pi}(r_\delta(t), s_\delta(t))| \pi) - \Var(\hat{\Pi}(s_\delta(t), r_\delta(t))| \pi) = 
	-\frac{(e^\epsilon+1)(e^\epsilon+\delta)(1-2\pi)(1-t)}{(e^\epsilon(e^\epsilon-1)t+\delta(1-t+e^\epsilon(1+t)))n}.$$
	We see that $1-2\pi \ge 0$ when $\pi \le \frac{1}{2}$, and $1-2\pi \le 0$ when $\pi \ge \frac{1}{2}$, and, since $t\in[0,1]$ and $\delta > 0$, the result follows.
	\end{proof}

Finally, we present $t_0(\epsilon, \delta)$ as the $t$-value which gives the endpoints of the line segments of $\mathcal{R}^\double$ at the boundary of the unit square.
	
\begin{lemma}
Define $t_0:\R \times \R \to [0,1]$ by
$$t_0(\epsilon,\delta) = \frac{\delta(e^\epsilon+1)}{e^\epsilon+\delta},$$
then,
$$(r_\delta(t_0(\epsilon,\delta)), s_\delta(t_0(\epsilon, \delta))) \in \partial \mathcal{R}^\prime.$$
\end{lemma}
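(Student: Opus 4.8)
The plan is to recognise that the point $(r_\delta(t_0),s_\delta(t_0))$ is nothing other than the corner $(1,\delta)$ of the unit square, and then to verify directly that this corner lies in $\mathcal{R}^\prime$ while being a topological boundary point.

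First I would compute the coordinates. The parametrisation (\ref{eq:eddpparam}) carries the built-in identity $r_\delta(t)=1-e^{-\epsilon}(s_\delta(t)-\delta)$, so the first coordinate equals $1$ exactly when $s_\delta(t)=\delta$. The value $t_0$ is manufactured for precisely this purpose: substituting $t_0=\frac{\delta(e^\epsilon+1)}{e^\epsilon+\delta}$ into $s_\delta(t)=\frac{e^\epsilon+\delta}{e^\epsilon+1}t$ gives $s_\delta(t_0)=\delta$, and the identity then forces $r_\delta(t_0)=1$. Hence $(r_\delta(t_0),s_\delta(t_0))=(1,\delta)$.

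Next I would confirm $(1,\delta)\in\mathcal{R}^\prime$ by checking the four defining conditions (taking $0<\delta\le1$, as throughout this section). The bounds $p_{00},p_{11}\le1$ hold since $\delta\le1$, and $p_{00}+p_{11}=1+\delta>1$. The privacy constraint $p_{11}\le e^\epsilon(1-p_{00})+\delta$ becomes $\delta\le\delta$, tight as expected because $(1,\delta)$ lies on the segment where (\ref{eq:dp1}) holds. The remaining constraint $p_{00}\le e^\epsilon(1-p_{11})+\delta$ reduces to $1\le e^\epsilon(1-\delta)+\delta$, i.e.\ $(e^\epsilon-1)(1-\delta)\ge0$, which holds since $e^\epsilon\ge1$ and $\delta\le1$.

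Finally, to place the point on $\partial\mathcal{R}^\prime$ I would use that every element of $\mathcal{R}^\prime$ satisfies $p_{00}\le1$: the points $(1+\eta,\delta)$ with $\eta>0$ are therefore outside $\mathcal{R}^\prime$, so $(1,\delta)$ is simultaneously a member of $\mathcal{R}^\prime$ and a limit of points in its complement, hence a boundary point. The only genuine obstacle is the single inequality $(e^\epsilon-1)(1-\delta)\ge0$; the real content is the observation that $t_0$ is defined so that $s_\delta(t_0)=\delta$, which collapses the endpoint onto the edge $p_{00}=1$ and renders everything else routine.
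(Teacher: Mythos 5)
Your proposal is correct and follows essentially the same route as the paper: compute that $(r_\delta(t_0),s_\delta(t_0))=(1,\delta)$, check membership in $\mathcal{R}^\prime$, and conclude it is a boundary point because the constraint $p_{00}\le 1$ is active there. You simply spell out the constraint verification (including the mild point that $(e^\epsilon-1)(1-\delta)\ge 0$ and that $\delta>0$ is needed for $p_{00}+p_{11}>1$) which the paper leaves as ``by explicit calculation'' and ``by definition''.
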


	\begin{proof}
	By explicit calculation,
	\begin{align*}
	r_\delta(t_0(\epsilon,\delta)) &= 1,\\
	s_\delta(t_0(\epsilon,\delta)) &= \delta.
	\end{align*}
	
	By definition, it follows that $(1, \delta) \in \mathcal{R}^\prime \cup \partial \mathcal{R}^\prime$, and since $p_{00} \le 1$ is a boundary of $\{(p_{00},p_{11}) \in \mathcal{R}^\prime\}$, it follows that $(1, \delta) \in \partial \mathcal{R}^\prime$.
	\end{proof}

\textbf{Remark:} When $\delta = 0$, $\left(r_\delta(t_0(\epsilon,\delta)), s_\delta(t_0(\epsilon, \delta))\right) \notin \mathcal{R}^\prime$, since we require $r_\delta + s_\delta > 1$.

\textbf{Remark:} By linearity, it follows that $(r_\delta(t), s_\delta(t)) \in \mathcal{R}^\prime$ for all $t_0(\epsilon,\delta) < t \le 1$, and that $(r_\delta(t), s_\delta(t)) \notin \mathcal{R}^\prime$ when $t < t_0(\epsilon,\delta)$.  

%%%%%%%%%%%%%%%%%%%%%%%%%%%%%%%%%%%%%%%%%%%%%%%%%%%%%%%%%%%%%%%%%%%%%%%%%%%%%%%%%%%%%%%%%%%%%%%%%%%%%%%%
\subsection{Main Result}

We now present the main results of this paper, which establish the optimal {\ed}-differentially private RR mechanism(s).  The following results assume $\delta > 0$; the optimal mechanism when $\delta = 0$ was presented in Theorem~\ref{th:edp}.  Note that we continue to assume $\pi \in (0,1)$ to ensure uniqueness of the optima.

The following theorem establishes the optimal RR mechanism(s) when $\pi \le \frac{1}{2}$.

\begin{theorem}\label{th:opteddp}
Let $\delta > 0$ and $0 < \pi \le \frac{1}{2}$, and define $g:\R\times\R\to\R$ by
\begin{equation}\label{eq:g_ed}
g(\epsilon,\delta) = \frac{\delta(e^\epsilon+\delta)}{(e^\epsilon+2\delta-1)^2}.
\end{equation}
Then, for $r_\delta$ and $s_\delta$ given by (\ref{eq:eddpparam}),
$$\argmin_{(p_{00},p_{11}) \in \mathcal{R}^\prime} \Var(\hat{\Pi}(p_{00},p_{11})| \pi)
=\begin{cases} \{(r_\delta(t_0),s_\delta(t_0))\}, & \text{if } g(\epsilon,\delta) > \pi,\\
\{(r_\delta(1),s_\delta(1))\}, & \text{if } g(\epsilon,\delta) < \pi,\\
\{(r_\delta(t_0),s_\delta(t_0)), (r_\delta(1),s_\delta(1))\}, & \text{if } g(\epsilon,\delta) = \pi.\end{cases}$$
where $t_0 = t_0(\epsilon,\delta)$.
\end{theorem}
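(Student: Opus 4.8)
The plan is to whittle the feasible region down to just two candidate points and then decide between them by a direct comparison of errors. First I would invoke Lemma~\ref{lm:eddp}: since $0<\pi<1$, every minimiser of $\Var(\hat{\Pi}|\pi)$ over $\mathcal{R}^\prime$ lies on $\mathcal{R}^\double$, the union of the two line segments where (\ref{eq:dp1}) or (\ref{eq:dp2}) is tight. Next, because $\pi\le\frac{1}{2}$, Lemma~\ref{lm:argminrssr} gives $\Var(\hat{\Pi}(r_\delta(t),s_\delta(t))|\pi)\le\Var(\hat{\Pi}(s_\delta(t),r_\delta(t))|\pi)$ for every $t$ in the common parameter range, so the overall minimum is already attained on the single segment $p_{00}=r_\delta(t)$, $p_{11}=s_\delta(t)$. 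By the lemma defining $t_0$ and the remark following it, this segment is exactly $\{(r_\delta(t),s_\delta(t)):t\in[t_0,1]\}$. Finally, since $\delta>0$, Lemma~\ref{lm:argminendpts} shows that any interior critical point of the error along this segment is a \emph{maximum}, so the minimum over $[t_0,1]$ must occur at an endpoint. Only $t=t_0$ and $t=1$ then survive as candidates.

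The second step is to evaluate the error at these two endpoints via (\ref{eq:genvar}). At $t=t_0$ we have $(r_\delta(t_0),s_\delta(t_0))=(1,\delta)$, and substitution gives
$$\Var(\hat{\Pi}(1,\delta)|\pi)=\frac{\pi(1-\pi\delta)}{\delta n}.$$
At $t=1$ we reach the symmetric extreme point $r_\delta(1)=s_\delta(1)=\frac{e^\epsilon+\delta}{e^\epsilon+1}$; writing $A=2r_\delta(1)-1=\frac{e^\epsilon+2\delta-1}{e^\epsilon+1}$, substitution gives
$$\Var(\hat{\Pi}(r_\delta(1),s_\delta(1))|\pi)=\frac{\tfrac14-A^2\left(\tfrac12-\pi\right)^2}{A^2 n}.$$

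The decisive step is to compare these two quantities. I would form their difference and factor it; using $A^2-1=\frac{4(\delta-1)(e^\epsilon+\delta)}{(e^\epsilon+1)^2}$ and the definition (\ref{eq:g_ed}) of $g$, I expect everything to collapse to the clean identity
$$\Var(\hat{\Pi}(1,\delta)|\pi)-\Var(\hat{\Pi}(r_\delta(1),s_\delta(1))|\pi)=\frac{1-\delta}{\delta n}\bigl(\pi-g(\epsilon,\delta)\bigr).$$
Since $0<\delta<1$, the prefactor is positive, so the sign of the difference is the sign of $\pi-g(\epsilon,\delta)$: when $g>\pi$ the point $(1,\delta)=(r_\delta(t_0),s_\delta(t_0))$ is strictly better, when $g<\pi$ the point $(r_\delta(1),s_\delta(1))$ is strictly better, and when $g=\pi$ the two tie — precisely the three cases of the statement.

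I expect the algebraic reduction to this identity to be the main obstacle: the raw difference of the two error expressions is unwieldy, and the whole argument hinges on the cancellations that isolate the single factor $\pi-g(\epsilon,\delta)$. A secondary point needing care is uniqueness of the argmin. The pointwise inequality from Lemma~\ref{lm:argminrssr} is strict only for $\pi<\frac12$ (its explicit difference carries the factor $1-2\pi$), so for $\pi<\frac12$ the optimum genuinely lies on the $r_\delta$\slash$s_\delta$ segment and the listed singletons are the unique minimisers; at the boundary $\pi=\frac12$ the two segments carry identical error and the $t_0$-endpoint acquires a symmetric twin $(\delta,1)$, which is why the singleton claims should be read under the standing assumption $\pi\in(0,1)$ that secures uniqueness away from this degenerate value.
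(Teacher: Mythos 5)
Your proof follows the paper's argument exactly: Lemmas~\ref{lm:eddp}, \ref{lm:argminrssr} and \ref{lm:argminendpts} are used in the same way to reduce the problem to the two candidate points $(1,\delta)$ and $\left(\frac{e^\epsilon+\delta}{e^\epsilon+1},\frac{e^\epsilon+\delta}{e^\epsilon+1}\right)$, and the sign of the difference of their errors decides among the three cases. Your closed form $\frac{1-\delta}{\delta n}\bigl(\pi-g(\epsilon,\delta)\bigr)$ for that difference is correct (indeed cleaner than the expression printed in the paper, whose factor $\pi(e^\epsilon+2\delta-1)$ should read $\pi(e^\epsilon+2\delta-1)^2$ to be consistent with solving for $g$), and your caveat about the symmetric twin $(\delta,1)$ at $\pi=\tfrac{1}{2}$ is a legitimate observation about the boundary case.
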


	\begin{proof}
	By Lemmas~\ref{lm:eddp}, \ref{lm:argminendpts} and \ref{lm:argminrssr}, we know that when $0 < \pi \le \frac{1}{2}$ and $\delta > 0$,
	$$\argmin_{(p_{00}, p_{11})\in \mathcal{R}^\prime} \Var(\hat{\Pi}(p_{00}, p_{11}) | \pi) \subseteq \{(r_\delta(t_0), s_\delta(t_0)), (r_\delta(1), s_\delta(1))\}.$$
	
	We are therefore considering two candidate points, which can be shown to resolve to
	\begin{align*}
	r_\delta(t_0) &= 1, & s_\delta(t_0) &= \delta,\\
	r_\delta(1) &= \frac{e^\epsilon + \delta}{e^\epsilon+1}, & s_\delta(1) &= \frac{e^\epsilon+\delta}{e^\epsilon+1}.
	\end{align*}
	We are therefore seeking to determine the sign of
	\begin{equation}\label{eq:optvarcomp}
	\Var(\hat{\Pi}(1, \delta) | \pi )-\Var\left(\left.\hat{\Pi}\left(\frac{e^\epsilon + \delta}{e^\epsilon+1},\frac{e^\epsilon + \delta}{e^\epsilon+1}\right) \right| \pi\right).
	\end{equation}
	
	After some manipulation, we can show that (\ref{eq:optvarcomp}) simplifies to
	$$\frac{(1-\delta)(\pi(e^\epsilon+2\delta-1)-\delta(e^\epsilon+\delta))}{\delta(e^\epsilon+2\delta-1)^2n},$$
	and we note that its denominator is strictly positive since $\delta > 0$. Note additionally that (\ref{eq:optvarcomp}) simplifies to zero when $\delta = 1$, which is trivial since $r_1(t_0)=s_1(t_0)=r_1(1)=s_1(1)=1$.

	The sign of (\ref{eq:optvarcomp}) is therefore determined by the sign of $\pi(e^\epsilon+2\delta-1)-\delta(e^\epsilon+\delta)$, which gives $g(\epsilon, \delta)$ when solved for $\pi$. Hence, $\Var(\hat{\Pi}(r_\delta(t_0),s_\delta(t_0))| \pi) < \Var(\hat{\Pi}(r_\delta(1),s_\delta(1))| \pi)$ when $g(\epsilon,\delta) > \pi$. The other results follow similarly.
	\end{proof}

\textbf{Remark:} When $g(\epsilon,\delta) \le \pi$, the optimal mechanism corresponds with that established for $\epsilon$-differential privacy on RR (with an added dependence for $\delta$) and also with the optimal mechanism established in Theorem~10 of \cite{HLM16} for mechanisms on categorical data.  However, when $g(\epsilon,\delta) > \pi$, the optimal mechanism is one which we have not encountered previously.

The next corollary establishes the optimal mechanism(s) when $\pi \ge \frac{1}{2}$, and follows from Theorem~\ref{th:opteddp} by the symmetry of $\Var(\hat{\Pi}(p_{00},p_{11})| \pi)$ in $p_{00}$ and $p_{11}$.

\begin{corollary}\label{cr:opteddp}
Let $\delta > 0$ and $\frac{1}{2} \le \pi < 1$. Then, for $r_\delta$ and $s_\delta$ given by (\ref{eq:eddpparam}) and $g$ given by (\ref{eq:g_ed}),
$$\argmin_{(p_{00},p_{11}) \in \mathcal{R}^\prime} \Var(\hat{\Pi}(p_{00},p_{11})|\pi)
 = \begin{cases} \{(s_\delta(t_0),r_\delta(t_0))\}, & \text{if } g(\epsilon,\delta) > 1-\pi,\\
\{(s_\delta(1),r_\delta(1))\}, & \text{if } g(\epsilon,\delta) < 1-\pi,\\
\{(s_\delta(t_0),r_\delta(t_0)), (s_\delta(1),r_\delta(1))\}, & \text{if } g(\epsilon,\delta) = 1-\pi,\end{cases}$$
where $t_0 = t_0(\epsilon,\delta)$.
\end{corollary}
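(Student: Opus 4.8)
The plan is to reduce the regime $\pi \ge \frac{1}{2}$ to the already-settled regime $\pi \le \frac{1}{2}$ of Theorem~\ref{th:opteddp} using the reflection symmetry of the error functional. The central ingredient is the identity
$$\Var(\hat{\Pi}(p_{00}, p_{11})| \pi) = \Var(\hat{\Pi}(p_{11}, p_{00})| 1-\pi),$$
which I would verify directly from the closed form (\ref{eq:genvar}): swapping $p_{00}$ and $p_{11}$ fixes the denominator $(p_{00}+p_{11}-1)^2 n$, while simultaneously replacing $\pi$ by $1-\pi$ merely flips the sign of the bracketed quantity $p_{00}-\frac{1}{2}-\pi(p_{00}+p_{11}-1)$ appearing inside the square in the numerator, leaving the numerator unchanged. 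This is exactly the relation already invoked in the proof of Theorem~\ref{th:edp}.

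Next I would note that the feasible region $\mathcal{R}^\prime$ is invariant under the coordinate reflection $(p_{00}, p_{11}) \mapsto (p_{11}, p_{00})$: the two constraints (\ref{eq:dp1}) and (\ref{eq:dp2}) are simply interchanged by the swap, and the remaining constraints $p_{00}, p_{11} \le 1$ and $p_{00}+p_{11} > 1$ are symmetric in $p_{00}$ and $p_{11}$. Combining this with the identity above, minimising $\Var(\hat{\Pi}(p_{00}, p_{11})| \pi)$ over $\mathcal{R}^\prime$ becomes, under the substitution $(q_{00}, q_{11}) = (p_{11}, p_{00})$, the problem of minimising $\Var(\hat{\Pi}(q_{00}, q_{11})| 1-\pi)$ over $\mathcal{R}^\prime$.

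Since $\pi \ge \frac{1}{2}$ forces $0 < 1-\pi \le \frac{1}{2}$, I would then apply Theorem~\ref{th:opteddp} verbatim with the parameter $1-\pi$ substituted for $\pi$. This delivers the minimising set in the $(q_{00}, q_{11})$ variables, with the trichotomy now governed by the comparison of $g(\epsilon, \delta)$ with $1-\pi$. Finally, reading the result back through $(p_{00}, p_{11}) = (q_{11}, q_{00})$ reflects each candidate point, sending $(r_\delta(t_0), s_\delta(t_0))$ to $(s_\delta(t_0), r_\delta(t_0))$ and $(r_\delta(1), s_\delta(1))$ to $(s_\delta(1), r_\delta(1))$, which is precisely the claimed characterisation.

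Once the symmetry identity is established, the remainder is pure bookkeeping, so I do not anticipate a substantive obstacle. The one point demanding care is to track the reflection consistently, so that the stated minimisers are the \emph{reflected} candidate points rather than the raw output of Theorem~\ref{th:opteddp}, and in particular to confirm that the boundary case $g(\epsilon,\delta) = 1-\pi$ transfers correctly under the swap.
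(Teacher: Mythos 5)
Your proposal is correct and follows essentially the same route as the paper, which likewise deduces the corollary from Theorem~\ref{th:opteddp} via the symmetry identity $\Var(\hat{\Pi}(p_{00},p_{11})|\pi)=\Var(\hat{\Pi}(p_{11},p_{00})|1-\pi)$. You simply make explicit the supporting details (the invariance of $\mathcal{R}^\prime$ under the coordinate swap and the careful reflection of the candidate points) that the paper leaves implicit.
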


	\begin{proof}
	The result follows from Theorem~\ref{th:opteddp} since
	\begin{align*}
	\Var(\hat{\Pi}(p_{00}, p_{11})|\pi) &= \Var(\hat{\Pi}(p_{11}, p_{00})| 1-\pi).\qedhere
	\end{align*}
	\end{proof}

Example~\ref{eg:eddp} and Figure~\ref{fig:eddp} illustrate the conclusion of Theorem~\ref{th:opteddp}.

	\begin{example}\label{eg:eddp}
	Consider Theorem~\ref{th:opteddp} and Corollary~\ref{cr:opteddp} for various values of $\epsilon$, $\delta$ and $\pi$. For simplicity, in each of these examples we set $n=1$.
	
	\begin{enumerate}
	\item $\epsilon=\frac{1}{2}$, $\delta=\frac{1}{10}$, $\pi=\frac{1}{4}$: In this case, we have $g(\epsilon,\delta) = 0.243 < \pi$. Hence, the design matrix of the optimal mechanism is denoted by
	$$\left(\begin{array}{cc} \frac{e^\epsilon+\delta}{e^\epsilon+1} & \frac{1-\delta}{e^\epsilon+1}\\
	\frac{1-\delta}{e^\epsilon+1} & \frac{e^\epsilon+\delta}{e^\epsilon+1}\end{array}\right).$$
	
	This can be verified by noting that $\Var(\hat{\Pi}(r_\delta(1), s_\delta(1))| \pi) = 2.372$ and $\Var(\hat{\Pi}(r_\delta(t_0),s_\delta(t_0))| \pi) = 2.438$.
	\item $\epsilon=1$, $\delta=\frac{2}{5}$, $\pi=\frac{1}{10}$: In this case, $g(\epsilon,\delta)=0.197 > \pi$. Hence, the design matrix of the optimal mechanism is denoted by
	$$\left(\begin{array}{cc} 1 & 0 \\
	1-\delta & \delta \end{array}\right).$$
	
	Again, this can be verified by noting that $\Var(\hat{\Pi}(r_\delta(1), s_\delta(1))| \pi) = 0.385$ and $\Var(\hat{\Pi}(r_\delta(t_0),s_\delta(t_0))| \pi) = 0.24$.
	\item $\epsilon=\frac{1}{2}$, $\delta=\frac{1}{3}$, $\pi=\frac{9}{10}$: Since $\pi\ge \frac{1}{2}$, we use Corollary~\ref{cr:opteddp} for this example.  We note that $g(\epsilon,\delta) = 0.382 > 1-\pi$. Hence, the design matrix of the optimal mechanism is denoted by
	$$\left(\begin{array}{cc} \delta & 1-\delta \\
	0 & 1
	\end{array}\right).$$
	
	We see that $\Var(\hat{\Pi}(s_\delta(1), r_\delta(1))| \pi) = 0.854$ and $\Var(\hat{\Pi}(s_\delta(t_0), r_\delta(t_0))| \pi) = 0.143$. Note also that $\Var(\hat{\Pi}(r_\delta(0), s_\delta(0))| \pi) = 1.911$, corresponding with the conclusion of Lemma~\ref{lm:argminrssr}
	\item $\epsilon=\ln(2), \delta=\frac{1}{4}, \pi=\frac{1}{4}$: In this case, we have $g(\epsilon, \delta) = \frac{1}{4} = \pi$, hence there are two optimal mechanisms,
	$$\left(\begin{array}{cc} \frac{e^\epsilon+\delta}{e^\epsilon+1} & \frac{1-\delta}{e^\epsilon+1}\\
	\frac{1-\delta}{e^\epsilon+1} & \frac{e^\epsilon+\delta}{e^\epsilon+1}\end{array}\right), \left(\begin{array}{cc} 1 & 0 \\
	1-\delta & \delta \end{array}\right).$$
	
	This can be verified by noting that $\Var(\hat{\Pi}(r_\delta(1), s_\delta(1))| \pi)=\Var(\hat{\Pi}(r_\delta(t_0), s_\delta(t_0))| \pi)=\frac{15}{16}$.
	\end{enumerate}
	\end{example}

\begin{figure}
  \centering
\includegraphics[width=0.99\columnwidth]{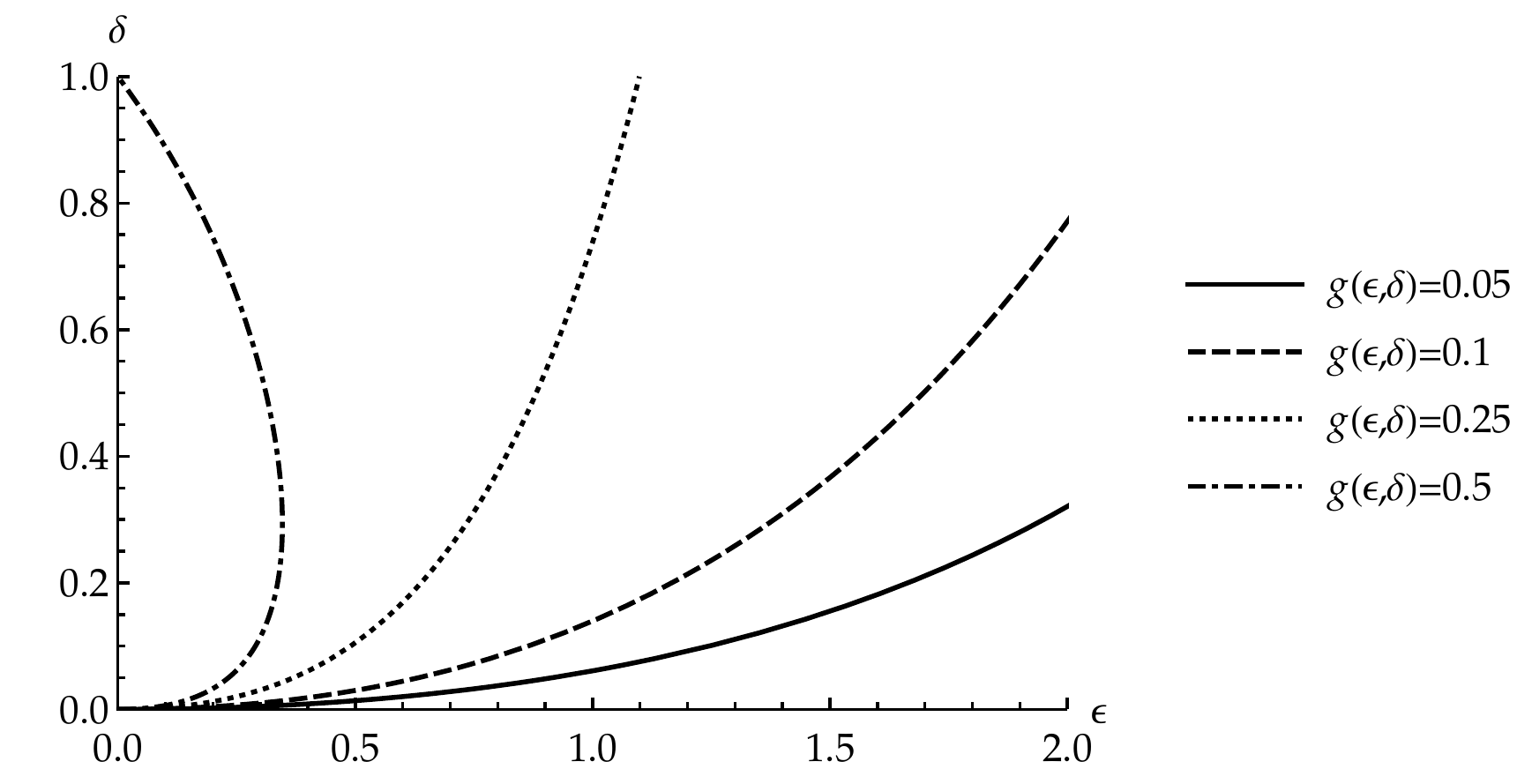}
  \caption[Contour plot of various level sets of $g(\epsilon,\delta)$]{A contour plot of various level sets of $g(\epsilon,\delta)$.  Given $\pi$, $\epsilon$ and $\delta$, these level sets can be used to determine the optimal {\ed}-differentially private RR mechanism.}
  \label{fig:eddp}
\end{figure}

%%%%%%%%%%%%%%%%%%%%%%%%%%%%%%%%%%%%%%%%%%%%%%%%%%%%%%%%%%%%%%%%%%%%%%%%%%%%%%%%%%%%%%%%%%%%%%%%%%%%%%%%
%%%%%%%%%%%%%%%%%%%%%%%%%%%%%%%%%%%%%%%%%%%%%%%%%%%%%%%%%%%%%%%%%%%%%%%%%%%%%%%%%%%%%%%%%%%%%%%%%%%%%%%%
\section{Optimal Warner Mechanism for \texorpdfstring{{\ed}-}{Relaxed }Differential Privacy}\label{sc:optwar}

In the final result of this paper, we examine the optimal mechanism for Warner's RR mechanism. We recall that Warner's mechanism imposed the additional constraint that $p_{00} = p_{11} = p_w$, so the design matrix becomes
$$\left(\begin{array}{cc} p_w & 1-p_w \\ 1-p_w & p_w \end{array}\right).$$
The error of such a mechanism is only a function of $p_w$ and the population proportion $\pi$, as shown in (\ref{eq:warnererror}).

As before, we require $2p_w > 1$. Our region of feasibility is therefore
$$\mathcal{R}_w = \left(\frac{1}{2}, \frac{e^\epsilon+\delta}{e^\epsilon+1}\right].$$

\begin{theorem}
Consider Warner's RR mechanism as presented in Section~\ref{sc:warner}. Then,
$$\argmin_{p_w \in \mathcal{R}_w} \Var(\hat{\Pi}_w(p_w)|\pi) = \left\{\frac{e^\epsilon+\delta}{e^\epsilon+1}\right\}.$$
\end{theorem}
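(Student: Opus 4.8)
The plan is to show that $\Var(\hat{\Pi}_w(p_w)| \pi)$ is strictly decreasing in $p_w$ on the feasible interval $\mathcal{R}_w = \left(\frac{1}{2}, \frac{e^\epsilon+\delta}{e^\epsilon+1}\right]$; since this interval is closed on the right, strict monotonicity immediately forces the unique minimiser to be the right endpoint $p_w = \frac{e^\epsilon+\delta}{e^\epsilon+1}$.

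The cleanest route is a direct substitution in (\ref{eq:warnererror}). Writing $u = 2p_w - 1 > 0$, we have $p_w - \frac{1}{2} = \frac{u}{2}$ and $\pi(2p_w-1) = \pi u$, so the bracketed term collapses to $p_w - \frac{1}{2} - \pi(2p_w-1) = u\left(\frac{1}{2}-\pi\right)$. Substituting, the error becomes
$$\Var(\hat{\Pi}_w(p_w)| \pi) = \frac{\frac{1}{4} - u^2\left(\frac{1}{2}-\pi\right)^2}{u^2 n} = \frac{1}{4nu^2} - \frac{\left(\frac{1}{2}-\pi\right)^2}{n}.$$
The second term is independent of $p_w$, while the first term, $\frac{1}{4nu^2}$, is strictly decreasing in $u$ (and hence in $p_w$) for $u > 0$. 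Thus the error is strictly decreasing on $\mathcal{R}_w$.

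Alternatively, one can deduce the same conclusion from Lemma~\ref{lm:eddp} without a fresh calculation: along Warner's diagonal $p_{00} = p_{11} = p_w$ the chain rule gives $\frac{d}{dp_w}\Var(\hat{\Pi}_w(p_w)| \pi) = \frac{\partial \Var}{\partial p_{00}} + \frac{\partial \Var}{\partial p_{11}}$, and each partial is strictly negative for $0 < \pi < 1$ by (\ref{eq:min1}) and (\ref{eq:min2}) (the hypothesis $p_{00}+p_{11} = 2p_w > 1$ holds throughout $\mathcal{R}_w$). Either way, strict monotonicity of the error follows.

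The argument is short because the substantive work was already done in Lemma~\ref{lm:eddp}; the only point requiring care is the half-open nature of $\mathcal{R}_w$. Because the interval excludes its left endpoint $\frac{1}{2}$ (where the error blows up) but includes its right endpoint $\frac{e^\epsilon+\delta}{e^\epsilon+1}$, it is essential that the monotonicity be \emph{strict}, so that no limiting competitor arises, and that $\frac{e^\epsilon+\delta}{e^\epsilon+1}$ genuinely lie in the feasible set — the latter being exactly the point where the diagonal $p_{00}=p_{11}$ meets the boundary $\mathcal{R}^\double$. With these observations the unique minimiser is $p_w = \frac{e^\epsilon+\delta}{e^\epsilon+1}$, as claimed.
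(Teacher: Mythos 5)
Your proof is correct and takes essentially the same route as the paper, which simply computes $\frac{\partial \Var(\hat{\Pi}_w(p_w)|\pi)}{\partial p_w} = \frac{1}{(1-2p_w)^3 n} < 0$ for $p_w > \frac{1}{2}$ and concludes the minimiser is $\max(\mathcal{R}_w)$; your substitution $u = 2p_w-1$, which isolates the $p_w$-dependence in the strictly decreasing term $\frac{1}{4nu^2}$, establishes exactly the same strict monotonicity.
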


	\begin{proof}
	By (\ref{eq:warnererror}), we note that
	$$\frac{\partial \Var(\hat{\Pi}_w(p_w)|\pi)}{\partial p_w} = \frac{1}{(1-2p_w)^3 n},$$
	hence $\frac{\partial \Var(\hat{\Pi}_w(p_w)|\pi)}{\partial p_w}<0$ when $p_w > \frac{1}{2}$. Therefore,
	$$\argmin_{p_w \in \mathcal{R}_w} \Var(\hat{\Pi}_w(p_w)|\pi) = \max (\mathcal{R}_w),$$
	and the result follows.
	\end{proof}

%%%%%%%%%%%%%%%%%%%%%%%%%%%%%%%%%%%%%%%%%%%%%%%%%%%%%%%%%%%%%%%%%%%%%%%%%%%%%%%%%%%%%%%%%%%%%%%%%%%%%%%%
%%%%%%%%%%%%%%%%%%%%%%%%%%%%%%%%%%%%%%%%%%%%%%%%%%%%%%%%%%%%%%%%%%%%%%%%%%%%%%%%%%%%%%%%%%%%%%%%%%%%%%%%
\section{Conclusions}\label{sc:conc}

We have presented the optimal differentially private RR mechanisms with respect to a maximum likelihood estimator, where both strict and relaxed differential privacy were considered.  For a given desired level of privacy, as determined by $\epsilon$ and $\delta$, we presented a method to quickly determine the optimal mechanism.  This will allow for the optimal implementation of differential privacy in any randomised response survey.

%%%%%%%%%%%%%%%%%%%%%%%%%%%%%%%%%%%%%%%%%%%%%%%%%%%%%%%%%%%%%%%%%%%%%%%%%%%%%%%%%%%%%%%%%%%%%%%%%%%%%%%%
\section*{Acknowledgement}

The first named author was supported by the Science Foundation Ireland grant SFI/11/PI/1177.

%%%%%%%%%%%%%%%%%%%%%%%%%%%%%%%%%%%%%%%%%%%%%%%%%%%%%%%%%%%%%%%%%%%%%%%%%%%%%%%%%%%%%%%%%%%%%%%%%%%%%%%%
%%%%%%%%%%%%%%%%%%%%%%%%%%%%%%%%%%%%%%%%%%%%%%%%%%%%%%%%%%%%%%%%%%%%%%%%%%%%%%%%%%%%%%%%%%%%%%%%%%%%%%%%


\begin{thebibliography}{10}

\bibitem{BIZ15}
{\sc Blair, G., Imai, K., and Zhou, Y.-Y.}
\newblock Design and analysis of the randomized response technique.
\newblock {\em Journal of the American Statistical Association 110}, 511
  (2015), 1304--1319.

\bibitem{Bor71}
{\sc Boruch, R.~F.}
\newblock Assuring confidentiality of responses in social research: A note on
  strategies.
\newblock {\em The American Sociologist 6}, 4 (1971), 308--311.

\bibitem{CDJ14}
{\sc Chen, X., Du, Q., Jin, Z., Xu, T., Shi, J., and Gao, G.}
\newblock The randomized response technique application in the survey of
  homosexual commercial sex among men in {B}eijing.
\newblock {\em Iran J Public Health 43}, 4 (Apr 2014), 416--422.
\newblock 26005651[pmid].

\bibitem{CJ08}
{\sc Coutts, E., and Jann, B.}
\newblock Sensitive questions in online surveys: Experimental results for the
  randomized response technique ({RRT}) and the unmatched count technique
  ({UCT}).
\newblock {\em Sociological Methods \& Research 40}, 1 (2011), 169--193.

\bibitem{DSF15}
{\sc Dietz, P., Striegel, H., Franke, A.~G., Lieb, K., Simon, P., and Ulrich,
  R.}
\newblock Randomized response estimates for the 12-month prevalence of
  cognitive-enhancing drug use in university students.
\newblock {\em Pharmacotherapy: The Journal of Human Pharmacology and Drug
  Therapy 33}, 1 (2013), 44--50.

\bibitem{DDH03}
{\sc Donovan, J.~J., Dwight, S.~A., and Hurtz, G.~M.}
\newblock An assessment of the prevalence, severity, and verifiability of
  entry-level applicant faking using the randomized response technique.
\newblock {\em Human Performance 16}, 1 (2003), 81--106.

\bibitem{Dwo06}
{\sc Dwork, C.}
\newblock Differential privacy.
\newblock In {\em Automata, Languages and Programming: 33rd International
  Colloquium, ICALP 2006, Venice, Italy, July 10-14, 2006, Proceedings, Part
  II}. Springer Berlin Heidelberg, Berlin, Heidelberg, 2006, pp.~1--12.

\bibitem{FL88}
{\sc Finkelhor, D., and Lewis, I.~A.}
\newblock An epidemiologic approach to the study of child molestationa.
\newblock {\em Annals of the New York Academy of Sciences 528}, 1 (1988),
  64--78.

\bibitem{GV13b}
{\sc Geng, Q., and Viswanath, P.}
\newblock The optimal mechanism in ($\epsilon$,$\delta$)-differential privacy.
\newblock {\em CoRR abs/1305.1330\/} (2013).

\bibitem{GV14}
{\sc Geng, Q., and Viswanath, P.}
\newblock The optimal mechanism in differential privacy.
\newblock In {\em Information Theory (ISIT), 2014 IEEE International Symposium
  on\/} (2014), IEEE, pp.~2371--2375.

\bibitem{Gin10}
{\sc Gingerich, D.~W.}
\newblock Understanding off-the-books politics: Conducting inference on the
  determinants of sensitive behavior with randomized response surveys.
\newblock {\em Political Analysis 18}, 3 (2010), 349--380.

\bibitem{GG75}
{\sc Goodstadt, M.~S., and Gruson, V.}
\newblock The randomized response technique: A test on drug use.
\newblock {\em Journal of the American Statistical Association 70}, 352 (1975),
  814--818.

\bibitem{GAS69}
{\sc Greenberg, B.~G., Abul-Ela, A.-L.~A., Simmons, W.~R., and Horvitz, D.~G.}
\newblock The unrelated question randomized response model: Theoretical
  framework.
\newblock {\em Journal of the American Statistical Association 64}, 326 (1969),
  520--539.

\bibitem{HLM15}
{\sc Holohan, N., Leith, D.~J., and Mason, O.}
\newblock Differential privacy in metric spaces: Numerical, categorical and
  functional data under the one roof.
\newblock {\em Information Sciences 305\/} (2015), 256--268.

\bibitem{HLM16}
{\sc Holohan, N., Leith, D.~J., and Mason, O.}
\newblock Differentially private response mechanisms on categorical data.
\newblock {\em Discrete Applied Mathematics 211\/} (2016), 86--98.

\bibitem{HLM16b}
{\sc {Holohan}, N., {Leith}, D.~J., and {Mason}, O.}
\newblock {Extreme Points of the Local Differential Privacy Polytope}.
\newblock {\em ArXiv e-prints\/} (May 2016).

\bibitem{Jac05}
{\sc Jackman, S.}
\newblock Pooling the polls over an election campaign.
\newblock {\em Australian Journal of Political Science 40}, 4 (2005), 499--517.

\bibitem{KOV14}
{\sc Kairouz, P., Oh, S., and Viswanath, P.}
\newblock Extremal mechanisms for local differential privacy.
\newblock In {\em Advances in Neural Information Processing Systems 27},
  Z.~Ghahramani, M.~Welling, C.~Cortes, N.~D. Lawrence, and K.~Q. Weinberger,
  Eds. Curran Associates, Inc., 2014, pp.~2879--2887.

\bibitem{KSK14}
{\sc Karwa, V., Slavkovi{\'{c}}, A.~B., and Krivitsky, P.}
\newblock {\em Differentially Private Exponential Random Graphs}.
\newblock Springer International Publishing, Cham, 2014, pp.~143--155.

\bibitem{Kru12}
{\sc Krumpal, I.}
\newblock Estimating the prevalence of xenophobia and anti-{S}emitism in
  {G}ermany: A comparison of randomized response and direct questioning.
\newblock {\em Social Science Research 41}, 6 (2012), 1387--1403.

\bibitem{Kru13}
{\sc Krumpal, I.}
\newblock Determinants of social desirability bias in sensitive surveys: a
  literature review.
\newblock {\em Quality \& Quantity 47}, 4 (2013), 2025--2047.

\bibitem{LSO04}
{\sc Lara, D., Strickler, J., Olavarrieta, C.~D., and Ellertson, C.}
\newblock Measuring induced abortion in {M}exico: A comparison of four
  methodologies.
\newblock {\em Sociological Methods \& Research 32}, 4 (2004), 529--558.

\bibitem{LHG97}
{\sc Larkins, E.~R., Hume, E.~C., and Garcha, B.~S.}
\newblock The validity of the randomized response method in tax ethics
  research.
\newblock {\em Journal of Applied Business Research 13}, 3 (1997), 25--32.

\bibitem{Man94}
{\sc Mangat, N.~S.}
\newblock An improved randomized response strategy.
\newblock {\em Journal of the Royal Statistical Society. Series B
  (Methodological) 56}, 1 (1994), 93--95.

\bibitem{MS90}
{\sc Mangat, N.~S., and Singh, R.}
\newblock An alternative randomized response procedure.
\newblock {\em Biometrika 77}, 2 (1990), 439--442.

\bibitem{Moo71}
{\sc Moors, J. J.~A.}
\newblock Optimization of the unrelated question randomized response model.
\newblock {\em Journal of the American Statistical Association 66}, 335 (1971),
  627--629.

\bibitem{SUS10}
{\sc Striegel, H., Ulrich, R., and Simon, P.}
\newblock Randomized response estimates for doping and illicit drug use in
  elite athletes.
\newblock {\em Drug and Alcohol Dependence 106}, 2--3 (2010), 230--232.

\bibitem{TF81}
{\sc Tracy, P.~E., and Fox, J.~A.}
\newblock The validity of randomized response for sensitive measurements.
\newblock {\em American Sociological Review 46}, 2 (1981), 187--200.

\bibitem{HGB00}
{\sc van~der Heijden, P. G.~M., van Gils, G., Bouts, J., and Hox, J.~J.}
\newblock A comparison of randomized response, computer-assisted
  self-interview, and face-to-face direct questioning: Eliciting sensitive
  information in the context of welfare and unemployment benefit.
\newblock {\em Sociological Methods \& Research 28}, 4 (2000), 505--537.

\bibitem{WWH14}
{\sc Wang, Y., Wu, X., and Hu, D.}
\newblock Using randomized response for differential privacy preserving data
  collection.
\newblock Tech. rep., Technical Report, DPL-2014-003, University of Arkansas,
  2014.

\bibitem{War65}
{\sc Warner, S.~L.}
\newblock Randomized response: A survey technique for eliminating evasive
  answer bias.
\newblock {\em Journal of the American Statistical Association 60}, 309 (1965),
  63--69.

\bibitem{WS94}
{\sc Williams, B.~L., and Suen, H.}
\newblock A methodological comparison of survey techniques in obtaining
  self-reports of condom-related behaviors.
\newblock {\em Psychological Reports 75}, 3 suppl (1994), 1531--1537.

\bibitem{WP13}
{\sc Wolter, F., and Preisendörfer, P.}
\newblock Asking sensitive questions: An evaluation of the randomized response
  technique versus direct questioning using individual validation data.
\newblock {\em Sociological Methods \& Research 42}, 3 (2013), 321--353.

\end{thebibliography}
\end{document}